\def\dref#1{(\ref{#1})}
\newtheorem{assumption}{Assumption}
\newtheorem{lemma}{Lemma}
\newtheorem{theorem}{Theorem}
\newtheorem{remark}{Remark}
\newtheorem{definition}{Definition}
\newtheorem{proof}{Proof}
\begin{document}

%
\title{Privacy-Preserved Average Consensus Algorithms with Edge-based Additive Perturbations}

\author{Yi Xiong and Zhongkui Li
\thanks{This work was supported by the National Natural Science Foundation of China under grants 61973006 and U1713223.}
\thanks{Y. Xiong and Z. Li are with College of Engineering, Peking University, Beijing 100871, China. E-mail: {\tt xiongyi@pku.edu.cn, zhongkli@pku.edu.cn}}
}

\IEEEtitleabstractindextext{%
\begin{abstract}
In this paper, we consider the privacy preservation problem in both discrete- and continuous-time average consensus algorithms with strongly connected and balanced graphs, against either internal honest-but-curious agents or external eavesdroppers. A novel algorithm is proposed, which adds edge-based perturbation signals to the process of consensus computation. Our algorithm can be divided into two phases: a coordinated scrambling phase, which is for privacy preservation, and a convergence phase. In the scrambling phase, each agent is required to generate some perturbation signals and add them to the edges leading out of it. In the convergence phase, the agents update their states following a normal updating rule. It is shown that an internal honest-but-curious agent can obtain the privacy of a target agent if and only if no other agents can communicate with the target agent. As for external eavesdroppers, it is proved that this kind of attackers can never obtain any agent's privacy.
\end{abstract}

\begin{IEEEkeywords}
Multi-agent systems, average consensus, privacy preservation, edge-based perturbation signals.
\end{IEEEkeywords}}

\maketitle

\IEEEdisplaynontitleabstractindextext

\IEEEpeerreviewmaketitle

\section{Introduction}
In the last decade, preserving privacy in average consensus has received increasing attentions in the systems and control community. Traditional consensus algorithms, e.g., those in \cite{OlfatiSwitching,Boyd,li2011consensus}, require
agents to directly exchange their states information for computation. This brings some security issues in the sense that the agents' initial states would be easily revealed to their out-neighbors and possible external adversaries. The disclosure of agents' initial states may be undesirable, since they usually contain some important and sensitive information. An example is that some participants of an organization want to reach a common opinion with a consensus algorithm but they do not want their own opinions known by others \cite{Mo2016Privacy}. Therefore, as long as cyber security is concerned,  consensus algorithms should be secure and have the ability of preventing internal or external adversaries from obtaining agents' privacy, i.e., their initial states.

The privacy preserving average consensus problem has been extensively investigated by many researchers. A natural idea is to encrypt the transmitted messages so that they will not be accessed by the public or third parties \cite{WangPushSum,YuezuLv2019accurate}. By making use of the Pailler cryptosystem and designing an ingenious communication mechanism, privacy preservation was successfully embedded in pairwise communication in \cite{WangCryptography}. The ratio consensus and homomorphic cryptosystems are combined in \cite{hadjicostis2020privacy}  to preserve privacy. Nevertheless, cryptology-based methods share a common weakness that they usually put heavy burden on the communication and computation on the network.

Many works, e.g., \cite{Huang,Mo2016Privacy,Manitara2013Privacy,Kia,He2018privacy,He2018preserving,KiaArxiv,GuptaPrivacy}, protect privacy by adding random noises or deterministic perturbation signals to the transmitted messages. Some of them utilized a tool called differential privacy \cite{cortes2016differential,DifferentialPrivacy_0}. For instance, \cite{Huang} and \cite{diffPrivacyOpt} added random noises according to a Laplace distribution to messages in order to protect the real initial states. It should be pointed out that in the existing works based on differential privacy, accurate average consensus cannot be reached. To overcome this drawback, some researchers designed correlated noises ingeniously. In \cite{Mo2016Privacy} and \cite{He2018privacy}, a sequence of zero-sum correlated noises were added to the messages to preserve privacy. Nevertheless, these correlated-noise-based works require undirected graphs and does not consider external adversaries. Instead of adding random noises, the authors in \cite{Kia}  inserted some well-designed deterministic perturbation signals into the classic consensus algorithm. In \cite{GuptaPrivacy}, the authors proposed an interesting privacy-preserving consensus algorithm, which adds edge-based perturbations. There are also some other methods to preserve privacy, such as the state decomposition \cite{WangStateDecomposition}, output masking \cite{dyMask}, hot-pluggable methods \cite{hotPluggable} and observability based methods \cite{dyCon,netDesign,adpNetPrivacy}.

In this paper, we intend to propose a novel privacy preserving consensus algorithm, which can achieve accurate average consensus for a broader class of network graphs and under a much relaxed privacy-preserving condition, compared to the existing literature. It is observed that in those aforementioned works such as \cite{Mo2016Privacy,He2018privacy,Kia}, each agent adds noises or perturbation signals to its original state and then broadcasts the obfuscated value to all its neighboring agents. The privacy preserving performance, nevertheless, might be compromised, since the heterogeneity of the network in the sense that the agents usually have different amounts of neighbors is not fully exploited. Motivated by this observation, in this paper we propose to insert some well-designed perturbation signals to the communication edges in the network. This method of adding edge-based perturbation signals, different from the node-based methods in the aforementioned works, brings more degrees of design freedom.

Specifically, in this paper we consider the privacy preservation problem in both discrete- and continuous-time average consensus algorithms with strongly connected and balanced graphs, against either internal or external attackers. By the proposed edge-based method, each agent designs a different perturbation signal for every edge leading out of it and then sends the corrupted values to its neighbors. A distinct feature is that perturbation signals does not to be added all the time. In the discrete-time case, each agent only adds perturbation signals at the first iteration, while in the continuous-time case the perturbation signals are injected only in a finite time interval. It is shown that an internal honest-but-curious agent can obtain privacy of a target agent if and only if no other agents except the internal attacker can communicate with the target agent. It is further proved that  external eavesdroppers cannot obtain any agent's privacy.

Compared to the existing privacy preserving average consensus algorithms, the algorithm proposed in this paper have several advantages. In contrast to the differential-privacy-based methods in \cite{Huang,diffPrivacyOpt}, our method can achieve accurate average consensus. Besides, our algorithm is valid when the graph is a balanced directed graph, while most existing works, e.g., \cite{WangCryptography,Mo2016Privacy,He2018privacy,WangStateDecomposition}, require undirected graphs. Compared to those in \cite{Kia}, our algorithm might perform better against the internal adversaries. In \cite{Kia}, an internal attacker can obtain an agent's privacy if and only if the agent itself and all its in-neighbors are the attacker's in-neighbors. By contrast, in the current paper the privacy disclosure happens only when the internal attacker is the only agent that can communicate with the target agent. Note that \cite{GuptaPrivacy} also proposed an edge-based privacy-preserving consensus algorithm, which requires undirected graphs and considers only internal attackers.  The algorithm in this paper, nevertheless, is simpler and is applicable to the case when the graph is directed and balanced and when external attackers exist. Moreover, the algorithm design and privacy analysis in this paper are based on a system-theorectic framework, significantly different form those in \cite{GuptaPrivacy} which are from a perspective of probability analysis.

The rest of this paper is organized as follows. The privacy preservation problem is formulated in Section \uppercase\expandafter{\romannumeral2}. A discrete-time privacy preserving average consensus algorithm is proposed and its properties are analyzed in Section \uppercase\expandafter{\romannumeral3}. Extensions to the continuous-time case are considered in Section \uppercase\expandafter{\romannumeral4}. Numerical simulations are conducted in Section \uppercase\expandafter{\romannumeral5} to demonstrate the effectiveness of our discrete-time algorithm. Finally, this paper is concluded in Section \uppercase\expandafter{\romannumeral6}.

\section{Problem Formulation}\label{background}
In this paper, we consider a network of $N$ ($N\geq 3$) agents. The communication among agents is modeled by a graph $\mathcal{G}$,  consisting of a node set $\mathcal{V}$ and an edge set $\mathcal{E}\subseteq\mathcal{V}\times\mathcal{V}$.
If $(i,j)\in\mathcal{E}$, it means that agent $i$ can send information to agent $j$, in which case agent $i$ is said to be an in-neighbor of agent $j$ and agent $j$ is said to be an out-neighbor of agent $i$.
A graph with the property that $d_i^{in}=d_i^{out},\mbox{ }\forall i\in\mathcal{V}$, is said to be balanced, where $d_i^{in}$ (respectively, $d_i^{out}$) denotes the cardinality of agent $i$'s in-neighbor set $\mathcal{N}_i^{in}$ (respectively, out-neighbor set $\mathcal{N}_i^{out}$). In such a graph, define each node's degree as $d_i=d_i^{in}=d_i^{out}$.

\begin{assumption}
The graph $\mathcal{G}$ is strongly connected and balanced.
\end{assumption}

Two types of attackers are considered in this paper. The first type is called internal honest-but-curious agents. This type of attackers are some agents in the network, which implement our algorithm correctly but will try to evaluate other agents' initial states via the collected information. The second type is called external eavesdroppers. This type of attackers exist outside the network and does not take part in the process of consensus computation, but they are able to hack into communication links and have access to all transmitted information among neighboring agents.

\begin{lemma}[\cite{OlfatiSwitching,bookLi2014Cooperative}]\label{lemma_normal_rule_discrete}
For a strongly connected and balanced graph, the following updating rule guarantees accurate average consensus:
\begin{equation}\label{Normal_update_rule_discrete}
x_i[k+1]=x_i[k]+\epsilon\sum_{j=1}^N a_{ij}(x_j[k]-x_i[k]),\mbox{ }\forall k\geq 0,
\end{equation}
where $\epsilon\in(0,\frac{1}{max\{d_1,\cdots,d_N\}})$ is a constant and $a_{ij}$ denotes the $(i,j)$-th entry of the adjacency matrix $\mathbf{A}$, defined as $a_{ii}=0$, $a_{ij}=1$ if $(j,i)\in\mathcal{E}$ and $a_{ij}=0$ otherwise.
\end{lemma}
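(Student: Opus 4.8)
The plan is to recast the scalar recursion \dref{Normal_update_rule_discrete} as a linear matrix iteration and then study the spectrum of the associated transition matrix. Stacking the states as $x[k]=[x_1[k],\dots,x_N[k]]^{\mathsf T}$ and writing $\mathbf{D}=\mathrm{diag}(d_1,\dots,d_N)$ and $L=\mathbf{D}-\mathbf{A}$ for the graph Laplacian, I would first verify that \dref{Normal_update_rule_discrete} is exactly $x[k+1]=Wx[k]$ with $W=I-\epsilon L$. The claim of accurate average consensus then amounts to showing $x[k]\to\frac{1}{N}\mathbf{1}\mathbf{1}^{\mathsf T}x[0]$, i.e.\ that $W^{k}\to\frac{1}{N}\mathbf{1}\mathbf{1}^{\mathsf T}$ as $k\to\infty$, where $\mathbf{1}$ is the all-ones vector.

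The next step is to collect the structural properties of $W$. Because every row of $L$ sums to zero, $W\mathbf{1}=\mathbf{1}$, so $W$ is row-stochastic and $1$ is an eigenvalue with right eigenvector $\mathbf{1}$. The bound $\epsilon\in(0,1/\max_i d_i)$ guarantees that the diagonal entries $W_{ii}=1-\epsilon d_i$ are strictly positive while the off-diagonal entries $W_{ij}=\epsilon a_{ij}$ are nonnegative; hence $W$ is a nonnegative, row-stochastic matrix with a strictly positive diagonal. Invoking Assumption 1, strong connectivity makes $W$ irreducible, and the positive diagonal makes it aperiodic, so $W$ is primitive. The balanced hypothesis enters here: since $d_i^{in}=d_i^{out}$ for all $i$, the column sums of $L$ also vanish, giving $\mathbf{1}^{\mathsf T}W=\mathbf{1}^{\mathsf T}$, so $\mathbf{1}^{\mathsf T}$ is the corresponding left eigenvector.

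The heart of the argument, and the main obstacle, is to prove the strict spectral gap, namely that $1$ is a simple eigenvalue of $W$ and that every other eigenvalue lies strictly inside the unit disc. Ger\v{s}gorin's disc theorem already confines each eigenvalue to the disc $|z-(1-\epsilon d_i)|\le\epsilon d_i$, each of which sits in the closed unit disc and meets the unit circle only at $z=1$; this yields $\rho(W)=1$ but does not by itself exclude other eigenvalues of modulus one or a higher-multiplicity eigenvalue at $1$. To close this gap I would appeal to the Perron--Frobenius theorem for primitive matrices, which guarantees that the spectral radius $1$ is a simple eigenvalue and is the unique eigenvalue on the unit circle. With this in hand, decomposing $W$ along its eigenvalue-$1$ eigenspace and its complementary invariant subspace shows that the spectral projector onto $\mathrm{span}\{\mathbf{1}\}$ is $\frac{\mathbf{1}\mathbf{1}^{\mathsf T}}{\mathbf{1}^{\mathsf T}\mathbf{1}}=\frac{1}{N}\mathbf{1}\mathbf{1}^{\mathsf T}$, while the contribution of the remaining eigenvalues decays geometrically. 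Letting $k\to\infty$ then gives $W^{k}\to\frac{1}{N}\mathbf{1}\mathbf{1}^{\mathsf T}$ and hence $x_i[k]\to\frac{1}{N}\sum_{j=1}^{N}x_j[0]$ for every $i$, which is the desired accurate average consensus.
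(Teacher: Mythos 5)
Your proof is correct and complete: the recasting as $x[k+1]=Wx[k]$ with $W=I-\epsilon L$, the verification that $W$ is row-stochastic with positive diagonal, primitivity from strong connectivity plus the positive diagonal, Perron--Frobenius to get the simple eigenvalue $1$ with all other eigenvalues strictly inside the unit disc, and the balanced hypothesis supplying $\mathbf{1}^{\mathsf T}W=\mathbf{1}^{\mathsf T}$ so that the limiting projector is $\frac{1}{N}\mathbf{1}\mathbf{1}^{\mathsf T}$, is exactly the standard Perron-matrix argument. The paper itself gives no proof of this lemma---it is quoted as a known result from the cited references---and your argument is essentially the one found in those sources, so there is nothing to reconcile.
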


\begin{lemma}[\cite{OlfatiSwitching,bookLi2014Cooperative}]\label{lemma_normal_rule_continuous}
For a strongly connected and balanced graph, the average consensus is achieved under the following continuous-time updating rule:
\begin{equation}\label{Normal_update_rule_continuous}
\dot{x}_i(t)=-c\sum_{j=1}^Na_{ij}(x_i(t)-x_j(t)),\mbox{ }\forall t\geq 0,
\end{equation}
where $c>0$ is a constant.
\end{lemma}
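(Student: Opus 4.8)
The plan is to recast the componentwise dynamics \dref{Normal_update_rule_continuous} into a compact matrix form and then read off the conclusion from the spectral structure of the graph Laplacian. First I would stack the states into $x(t)=[x_1(t),\dots,x_N(t)]^T$ and observe that $\sum_{j=1}^N a_{ij}(x_i-x_j)=d_i x_i-\sum_{j=1}^N a_{ij}x_j$, so that \dref{Normal_update_rule_continuous} becomes $\dot{x}(t)=-c\mathbf{L}x(t)$, where $\mathbf{L}=\mathbf{D}-\mathbf{A}$ is the Laplacian, $\mathbf{A}$ is the adjacency matrix in the statement, and $\mathbf{D}=\mathrm{diag}(d_1,\dots,d_N)$. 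Two structural facts follow at once: each row of $\mathbf{L}$ sums to zero, giving $\mathbf{L}\mathbf{1}=0$ with $\mathbf{1}$ the all-ones vector; and, because the graph is balanced so that the out-degree equals the in-degree at every node, each column of $\mathbf{L}$ also sums to zero, giving $\mathbf{1}^T\mathbf{L}=0$.

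Next I would determine the spectrum of $\mathbf{L}$. Since the diagonal entries are $d_i$ and the off-diagonal row sums satisfy $\sum_{j\neq i}a_{ij}=d_i$, the Ger\v{s}gorin disc theorem places every eigenvalue in the closed right half-plane, and the facts above identify $0$ as an eigenvalue having $\mathbf{1}$ as both a right and a left eigenvector. The key step---and the one I expect to be the main obstacle---is to show that strong connectivity forces this zero eigenvalue to be algebraically simple and all remaining eigenvalues to have strictly positive real parts. This is exactly where irreducibility enters: strong connectivity makes $\mathbf{A}$, and hence $\mathbf{L}$, irreducible, and a Perron--Frobenius type argument then pins the null space of $\mathbf{L}$ down to $\mathrm{span}\{\mathbf{1}\}$ and rules out a nontrivial Jordan block at $0$.

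With the spectrum in hand, convergence follows from a modal decomposition of $e^{-c\mathbf{L}t}$. I would first note that the network average is invariant, since $\frac{d}{dt}(\mathbf{1}^T x)=-c\,\mathbf{1}^T\mathbf{L}x=0$, whence $\mathbf{1}^Tx(t)\equiv\mathbf{1}^Tx(0)$. Writing $x(t)=e^{-c\mathbf{L}t}x(0)$ and using that the simple zero eigenvalue contributes the projector $\frac{1}{N}\mathbf{1}\mathbf{1}^T$ while every other mode decays like $e^{-c\lambda t}$ with $\mathrm{Re}(\lambda)>0$, I would conclude
\begin{equation*}
\lim_{t\to\infty}x(t)=\frac{1}{N}\mathbf{1}\mathbf{1}^Tx(0)=\Big(\frac{1}{N}\sum_{i=1}^N x_i(0)\Big)\mathbf{1},
\end{equation*}
which is precisely average consensus. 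As a cleaner alternative for this last step, I would keep in reserve a Lyapunov argument on the disagreement vector $\delta(t)=x(t)-\frac{1}{N}\mathbf{1}\mathbf{1}^Tx(0)$ with $V=\tfrac12\delta^T\delta$: the average invariance keeps $\delta$ orthogonal to $\mathbf{1}$, and on a balanced strongly connected graph the symmetric part $\mathbf{L}+\mathbf{L}^T$ is the Laplacian of the connected undirected graph with adjacency $\mathbf{A}+\mathbf{A}^T$, hence positive semidefinite with null space $\mathrm{span}\{\mathbf{1}\}$. Consequently $\dot{V}=-\tfrac{c}{2}\,\delta^T(\mathbf{L}+\mathbf{L}^T)\delta<0$ whenever $\delta\neq 0$, forcing $\delta\to 0$ and thereby avoiding any Jordan-form bookkeeping for the nonsymmetric $\mathbf{L}$.
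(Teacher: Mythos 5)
Your proposal is correct, but there is nothing in the paper to compare it against: this lemma is imported verbatim from the cited references (Olfati-Saber and Murray; Li and Duan) and the paper gives no proof of it. What you wrote is essentially the standard proof from that literature. The skeleton is sound at every step: the compact form $\dot{\mathbf{x}}=-c\mathbf{L}\mathbf{x}$; the Ger\v{s}gorin discs $\{z:|z-d_i|\leq d_i\}$, which lie in the closed right half-plane and touch the imaginary axis only at the origin (so every nonzero eigenvalue automatically has strictly positive real part---strong connectivity is needed only for the simplicity of the zero eigenvalue, a minor attribution point); irreducibility plus a Perron--Frobenius argument (e.g.\ applied to $s\mathbf{I}-\mathbf{L}$ with $s=\max_i d_i$) giving algebraic simplicity of the zero eigenvalue, with $\mathbf{1}$ as right eigenvector and, by balancedness, also left eigenvector; and invariance of $\mathbf{1}^T\mathbf{x}$ identifying the limit as the initial average via the spectral projector $\frac{1}{N}\mathbf{1}\mathbf{1}^T$. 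Your fallback Lyapunov route is in fact the argument used in the original Olfati-Saber--Murray paper, and you correctly flag that balancedness is exactly what makes $\mathbf{L}+\mathbf{L}^T$ the Laplacian of the connected ``mirror'' graph. The only step worth tightening there is the final inference from ``$\dot{V}<0$ whenever $\delta\neq 0$'' to $\delta\to 0$: strictly, you should invoke the bound $\delta^T(\mathbf{L}+\mathbf{L}^T)\delta\geq\lambda_2\|\delta\|^2$ valid on the invariant subspace $\mathbf{1}^\perp$, where $\lambda_2>0$ is the smallest nonzero eigenvalue of $\mathbf{L}+\mathbf{L}^T$, which yields $\dot{V}\leq -c\lambda_2 V$ and hence exponential convergence; pointwise negativity of $\dot V$ alone does not in general imply attractivity, although for this linear system the quadratic bound is immediate.
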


The consensus algorithms \dref{lemma_normal_rule_discrete} and \dref{Normal_update_rule_continuous} cannot preserve privacy. For example, it is obvious in \dref{Normal_update_rule_discrete} that all agents' initial states information will be transmitted at the first iteration, implying that an agent's privacy will be revealed to its out-neighbors and all agents' privacy will be revealed to the external eavesdroppers.

The goal of this paper is to design a novel consensus algorithm that can fulfill two tasks: \romannumeral1) to achieve accurate average consensus, and \romannumeral2) to prevent the aforementioned two types of attackers from obtaining agents' privacy.

\section{Discrete-time Privacy Preserving Updating Rule and Privacy Analysis}
In this section, we consider the discrete-time consensus case.

The proposed algorithm is stated as follows. At the first iteration, each agent generates some perturbation signals and then add them to the edges leading out of it. Specially, for each agent $i$, if $(i,j)\in\mathcal{E}$, then agent $i$ choose a real number $p_i^{(j)}\in\mathbb{R}$ and transmit $y_i^{(j)}[0]=x_i[0]+p_i^{(j)}$ to its out-neighbor, agent $j$. All the agents then update their states as
\begin{equation}\label{iter_0_discrete}
x_i[1]=x_i[0]+\epsilon_1\sum_{j=1}^N a_{ij} (y_j^{(i)}[0]-x_i[0])+\epsilon_1 p_i^{(i)},
\end{equation}
where $\epsilon_1\in(-\infty,0)\bigcup(0,\infty)$ is a constant and $p_i^{(i)}=-\sum_{j\in\mathcal{N}_i^{out}}p_i^{(j)}$. For $k\geq 1$, all agents transmit their true state values and implement the algorithm in Lemma \ref{lemma_normal_rule_discrete}, i.e.,
\begin{equation}\label{iter_from_1_discrete}
x_i[k+1]=x_i[k]+\epsilon_2\sum_{j=1}^N a_{ij}(x_j[k]-x_i[k]), \mbox{ }\forall k\geq 1,
\end{equation}
where $\epsilon_2\in(0,\frac{1}{max\{d_1,\cdots,d_N\}})$ is a constant.

The algorithm is summarized in Algorithm 1.

\begin{algorithm}[htb]
\caption{ Privacy Preserving Average Consensus Algorithm.}
\label{alg:our_alg_discrete}
\hspace*{0.02in} {\bf When $k=0$:}
\begin{algorithmic}[1] 
\STATE Each agent $i$ chooses $d_i$ numbers from the set of real numbers $\mathbb{R}$: $p_i^{(j_1)},\cdots,p_i^{(j_{d_i})}$, where $j_1,\cdots,j_{d_i}\in\mathcal{N}_i^{out}$;
\STATE Each agent $i$ computes $p_i^{(i)}=-\sum_{j\in\mathcal{N}_i^{out}}p_i^{(j)}$;
\STATE Agent $i$ transmits $y_i^{(j)}[0]$ to agent $j$, $\forall j\in\mathcal{N}_i^{out}$;
\STATE Each agent $i$ updates its state according to \dref{iter_0_discrete};
\end{algorithmic}
\hspace*{0.02in} {\bf When $k\geq 1$:}
\begin{algorithmic}[1] 
\STATE All agents update according to the normal rule \dref{iter_from_1_discrete}.
\end{algorithmic}
\end{algorithm}


\begin{theorem}\label{thm1_discrete}
By Algorithm 1, accurate average consensus is achieved.
\end{theorem}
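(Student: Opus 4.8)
The plan is to reduce everything to Lemma~\ref{lemma_normal_rule_discrete}. Since the updates for $k\geq 1$ are exactly the normal rule \dref{iter_from_1_discrete} on a strongly connected and balanced graph, that lemma guarantees the states converge to $\frac{1}{N}\sum_{i=1}^N x_i[1]$. Hence it is enough to prove that the single perturbed step \dref{iter_0_discrete} preserves the total, i.e. $\sum_{i=1}^N x_i[1]=\sum_{i=1}^N x_i[0]$; the common limit then equals the true average $\frac{1}{N}\sum_{i=1}^N x_i[0]$.

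First I would sum \dref{iter_0_discrete} over $i$ and substitute $y_j^{(i)}[0]=x_j[0]+p_j^{(i)}$, separating the increment into three parts: the unperturbed consensus term $\epsilon_1\sum_i\sum_j a_{ij}(x_j[0]-x_i[0])$, the incoming-edge perturbation term $\epsilon_1\sum_i\sum_j a_{ij}p_j^{(i)}$, and the self term $\epsilon_1\sum_i p_i^{(i)}$.

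Next I would evaluate the three parts. For the unperturbed term, interchanging the order of summation and using the balanced condition of Assumption~1 rewrites it as $\sum_j x_j[0](d_j^{out}-d_j^{in})=0$. For the incoming-edge term, I would view the double sum as a sum over the edges $(j,i)\in\mathcal{E}$ and regroup by the sending agent $j$, obtaining $\sum_j\sum_{i\in\mathcal{N}_j^{out}}p_j^{(i)}$; the defining relation $p_j^{(j)}=-\sum_{i\in\mathcal{N}_j^{out}}p_j^{(i)}$ turns this into $-\sum_j p_j^{(j)}$, which is precisely the negative of the self term. The two perturbation contributions therefore cancel, leaving $\sum_i x_i[1]=\sum_i x_i[0]$, and the reduction to Lemma~\ref{lemma_normal_rule_discrete} finishes the argument.

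The computation itself is mostly bookkeeping, so the hard part will be setting up the edge-based rearrangement correctly: one must recognize that, for a fixed sender $j$, summing $a_{ij}p_j^{(i)}$ over all receivers $i$ reassembles exactly the quantity $\sum_{i\in\mathcal{N}_j^{out}}p_j^{(i)}$ that the zero-sum construction of $p_j^{(j)}$ was designed to annihilate. This matching, together with the balancedness that kills the genuine consensus term, is what guarantees that injecting the perturbations leaves the average intact, independently of the nonzero scaling constant $\epsilon_1$.
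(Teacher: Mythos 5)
Your proposal is correct and follows essentially the same route as the paper: showing that the perturbed first iteration \dref{iter_0_discrete} preserves $\sum_i x_i$ and then invoking Lemma~\ref{lemma_normal_rule_discrete} for $k\geq 1$. The only difference is notational --- your scalar bookkeeping (balancedness killing the consensus term, the locally zero-sum construction $p_j^{(j)}=-\sum_{i\in\mathcal{N}_j^{out}}p_j^{(i)}$ killing the perturbation terms) is exactly the paper's matrix computation $\mathbf{1}_N^T\mathbf{x}[1]=\mathbf{1}_N^T(\mathbf{I}-\epsilon_1\mathbf{L})\mathbf{x}[0]+\epsilon_1\mathbf{1}_N^T\mathbf{P}^T\mathbf{1}_N=\mathbf{1}_N^T\mathbf{x}[0]$, which rests on the identities $\mathbf{L}^T\mathbf{1}_N=\mathbf{0}$ and $\mathbf{P}\mathbf{1}_N=\mathbf{0}$.
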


\begin{proof}
Firstly, we prove that the first iteration \eqref{iter_0_discrete} does not change the sum of agents' states.
Define the perturbation matrix as $\mathbf{P}=[p_i^{(j)}]\in\mathbb{R}^{N\times N}$,
where $p_i^{(j)}$ is defined to be 0 if $(i,j)\notin \mathcal{E}$. Note that $p_i^{(i)}=-\sum_{j\in\mathcal{N}_i^{out}}p_i^{(j)}$, we have $
\mathbf{P}\mathbf{1}_N=\mathbf{0},$ where $\mathbf{1}_N$ is a $N\times 1$ vector with unitary elements. By definition, we rewrite the first iteration \dref{iter_0_discrete} as
\begin{equation}\label{iter_0_matrix_form_discrete}
\mathbf{x}[1]=(\mathbf{I}-\epsilon_1\mathbf{L})\mathbf{x}[0]
  +\epsilon_1\mathbf{P}^T\mathbf{1}_N,
\end{equation}
where $\mathbf{x}=[x_1,\cdots,x_N]^T$, $\mathbf{I}$ denotes the identity matrix, and $\mathbf{L}=[l_{ij}]\in\mathbb{R}^{N\times N}$ represents the Laplacian matrix, defined as $l_{ii}=\sum_{j\neq i}a_{ij}$ and $l_{ij}=-a_{ij}$ if $i\neq j$.. Note that $\mathbf{L}^T\mathbf{1}_N=\mathbf{0}$ as the graph is balanced, we have
\begin{equation}\label{invariant_sum_discrete}
\begin{aligned}
\mathbf{1}_N^T \mathbf{x}[1]
  =\mathbf{1}_N^T(\mathbf{I}-\epsilon_1\mathbf{L})\mathbf{x}[0]
  +\epsilon_1\mathbf{1}_N^T\mathbf{P}^T\mathbf{1}_N
  =\mathbf{1}_N^T\mathbf{x}[0],
\end{aligned}
\end{equation}
implying that the sum of agents' states is unchanged after the first iteration.

From $k=1$, the agents implement the normal algorithm in Lemma \ref{lemma_normal_rule_discrete}. Then, it follows from Lemma \ref{lemma_normal_rule_discrete} that the accurate average consensus can be achieved.
\hfill $\blacksquare$
\end{proof}

\begin{remark}\label{remark_1_discrete}
The injected perturbations are locally zero-sum, i.e., $\sum_{j\in\{i\}\bigcup\mathcal{N}_i^{out}}p_i^{(j)}=0$. This is vital to guaranteeing the accuracy of the final convergence value.
\end{remark}

\subsection{Privacy Preservation Against Internal Honest-but-curious Agents}
The privacy preservation property against internal honest-but-curious agents is firstly analyzed. We assume that there exists only one internal honest-but-curious agent and it is represented by $m$. Nevertheless, it is worth noting that our results can be easily extended to collaborative agents by regarding them as a super node.

What information can the attacker obtain should be defined first. It is assumed that every internal agent (curious or not) can store all the information transmitted to it and knows the graph topology and the parameters $\epsilon_1$ and $\epsilon_2$.
Thus, for any agent $i$, we define the set of all information that it can access as its information set $\Phi_i$, which is defined as
\begin{equation}\label{information_set_internal_discrete}
\begin{aligned}
\Phi_i=&\{\mathcal{G};\mbox{ } \epsilon_1,\epsilon_2;\mbox{ } y_j^{(i)}[0],j\in\mathcal{N}_i^{in};
\mbox{ } x_j[k],k\geq 1, \\
&j\in\mathcal{N}_i^{in};\mbox{ } x_i[k],k\geq 0;\mbox{ } p_i^{(j)},j\in \mathcal{N}_i^{out},  \\
&\mbox{ } the \mbox{ } form \mbox{ } of \mbox{ } Algorithm \mbox{ } \ref{alg:our_alg_discrete}\}.
\end{aligned}
\end{equation}

Inspired by \cite{Kia}, we present a lemma before moving forward. Without loss of generality, we arbitrarily choose an agent in the network and label it as agent 1. As the graph is strongly connected and balanced, its in-neighbor set $\mathcal{N}_1^{in}$ is not empty. We arbitrarily choose an agent from $\mathcal{N}_1^{in}$ and label it as agent 2, and let $\mathcal{V}^3=\mathcal{V}\backslash\{1,2\}$ be the set of other agents. Denoted by $\mathbf{x}_3[k]$ the aggregated states of agents in $\mathcal{V}^3$. In a compatible manner, $\mathbf{L},\mathbf{A}$ and $\mathbf{P}$ are partitioned to the following subblock matrices:
\begin{equation}\label{L_and_P_discrete}
  \mathbf{L}=\begin{bmatrix}
               d_1 & -1 & -\mathbf{A}_{13} \\
               -a_{21} & d_2 & -\mathbf{A}_{23} \\
               -\mathbf{A}_{31} & -\mathbf{A}_{32} & \mathbf{L}_{33}
             \end{bmatrix},
  \mathbf{P}=\begin{bmatrix}
                p_1^{(1)} & p_1^{(2)}& \mathbf{P}_1^{(3)} \\
                p_2^{(1)} & p_2^{(2)} & \mathbf{P}_2^{(3)} \\
                \mathbf{P}_3^{(1)} & \mathbf{P}_3^{(2)} & \mathbf{P}_3^{(3)}
              \end{bmatrix}.
\end{equation}

\begin{lemma}\label{lemma_my_internal_discrete}
Consider two implementations of our algorithm. In the first implementation, agents' initial states are $x_1[0],x_2[0],\cdots,x_N[0]$ and the generated perturbation signals are $p_i^{(j)}$s. In the second implementation, agents' initial states are
\begin{equation}\label{second_implementation_lemma_1_ini_discrete}
\begin{aligned}
&\tilde{x}_1[0]\in\mathbb{R},\mbox{ } \tilde{x}_2[0]=x_1[0]+x_2[0]-\tilde{x}_1[0],\\
&\mathbf{\tilde{x}}_3[0]=\mathbf{x}_3[0],
\end{aligned}
\end{equation}
and the perturbations are given as
\begin{equation}\label{second_implementation_lemma_1_perturbation_discrete}
\begin{aligned}
&\tilde{p}_1^{(1)}=p_1^{(1)}+d_1(\tilde{x}_1[0]-x_1[0]),\\
&\tilde{p}_1^{(2)}=p_1^{(2)}-a_{21}(\tilde{x}_1[0]-x_1[0]),\\
&\mathbf{\tilde{P}}_1^{(3)}=\mathbf{P}_1^{(3)}-\mathbf{A}_{31}^T(\tilde{x}_1[0]-x_1[0]),\\
&\tilde{p}_2^{(1)}=p_2^{(1)}+(\frac{1}{\epsilon_1}-1)(\tilde{x}_2[0]-x_2[0]),\\
&\tilde{p}_2^{(2)}=p_2^{(2)}+(d_{2}-\frac{1}{\epsilon_1})(\tilde{x}_2[0]-x_2[0]),\\
&\mathbf{\tilde{P}}_2^{(3)}=\mathbf{P}_2^{(3)}-\mathbf{A}_{32}^T(\tilde{x}_2[0]-x_2[0]),\\
&\mathbf{\tilde{P}}_3^{(j)}=\mathbf{P}_3^{(j)},\mbox{ }j\in{1,2,3}.
\end{aligned}
\end{equation}

Then, in both implementations, accurate average consensus can be achieved and the the convergence values are the same, i.e., $\lim_{k\rightarrow\infty}x_i[k]=\tilde{x}_i[k]$. Moreover, if agent $i\in\mathcal{V}^3$, then in both implementations agent $i$'s information sets are the same, i.e., $\Phi_i=\tilde{\Phi}_i$.
\end{lemma}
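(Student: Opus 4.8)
The plan is to reduce everything to a single identity: the second implementation is engineered so that its first-iteration output coincides exactly with that of the first, $\mathbf{x}[1]=\tilde{\mathbf{x}}[1]$. Since from $k=1$ onward both runs apply the identical deterministic rule \dref{iter_from_1_discrete}, this one equality propagates to give $\mathbf{x}[k]=\tilde{\mathbf{x}}[k]$ for every $k\geq1$. Convergence to the accurate average in both runs is then immediate from Theorem \ref{thm1_discrete}, and equality of the limits follows at once (alternatively, one checks directly from \dref{second_implementation_lemma_1_ini_discrete} that $\mathbf{1}_N^T\tilde{\mathbf{x}}[0]=\mathbf{1}_N^T\mathbf{x}[0]$, so the two averages agree).

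The core of the argument is therefore to verify $\mathbf{x}[1]=\tilde{\mathbf{x}}[1]$. Using the matrix form \dref{iter_0_matrix_form_discrete} for both runs and subtracting, I would write
\begin{equation*}
\tilde{\mathbf{x}}[1]-\mathbf{x}[1]=(\mathbf{I}-\epsilon_1\mathbf{L})(\tilde{\mathbf{x}}[0]-\mathbf{x}[0])+\epsilon_1(\tilde{\mathbf{P}}-\mathbf{P})^T\mathbf{1}_N.
\end{equation*}
Setting $\delta:=\tilde{x}_1[0]-x_1[0]$, the constraints \dref{second_implementation_lemma_1_ini_discrete} give $\tilde{\mathbf{x}}[0]-\mathbf{x}[0]=\delta[1,-1,0,\dots,0]^T$. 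The idea is to evaluate both terms block by block using the partition \dref{L_and_P_discrete}. The first term is $\delta$ times the difference of the first two columns of $\mathbf{I}-\epsilon_1\mathbf{L}$; the second is $\epsilon_1\delta$ times the column sums of $\tilde{\mathbf{P}}-\mathbf{P}$, which by \dref{second_implementation_lemma_1_perturbation_discrete} are supported only on rows $1$ and $2$. Carrying out the three blocks (the agent-$1$ entry, the agent-$2$ entry, and the $\mathcal{V}^3$ block), every coefficient cancels: the factors $\tfrac{1}{\epsilon_1}$, $d_1$, $d_2$, $a_{21}$, $\mathbf{A}_{31}$, $\mathbf{A}_{32}$ in \dref{second_implementation_lemma_1_perturbation_discrete} are precisely reverse-engineered to produce this cancellation. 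This calculation is routine but bookkeeping-heavy, and it is the computational heart of the lemma.

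With the trajectory identity in hand, I would finish by checking the information set \dref{information_set_internal_discrete} entry by entry for an arbitrary $i\in\mathcal{V}^3$. The topology $\mathcal{G}$, the constants $\epsilon_1,\epsilon_2$, and the form of the algorithm are common to both runs. Agent $i$'s own perturbations $p_i^{(j)}$ lie in rows $\geq3$ of $\mathbf{P}$, which are unchanged since $\tilde{\mathbf{P}}_3^{(j)}=\mathbf{P}_3^{(j)}$. Its own state $x_i[0]$ is unchanged because $i\in\mathcal{V}^3$, and $x_i[k]$ together with the in-neighbor states $x_j[k]$ for $k\geq1$ coincide by $\mathbf{x}[k]=\tilde{\mathbf{x}}[k]$.

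The one genuinely delicate entry is the set of first-round messages $y_j^{(i)}[0]=x_j[0]+p_j^{(i)}$ received from in-neighbors $j\in\mathcal{N}_i^{in}$, for which I would show $(\tilde{x}_j[0]-x_j[0])+(\tilde{p}_j^{(i)}-p_j^{(i)})=0$. For $j\in\mathcal{V}^3$ both differences vanish. For $j=1$ the state difference is $\delta$ while the perturbation difference is the $i$-th entry of $-\mathbf{A}_{31}^T\delta$, namely $-a_{i1}\delta$; for $j=2$ the state difference is $-\delta$ while the perturbation difference is the $i$-th entry of $\mathbf{A}_{32}^T\delta$, namely $a_{i2}\delta$. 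The subtle point, and the step I expect to require the most care, is that $j\in\{1,2\}$ can be an in-neighbor of $i$ only when the corresponding edge exists, i.e. $a_{i1}=1$ (resp. $a_{i2}=1$); in exactly that case the two differences cancel, so $\tilde{y}_j^{(i)}[0]=y_j^{(i)}[0]$. Hence every component of $\Phi_i$ agrees with that of $\tilde{\Phi}_i$, completing the argument.
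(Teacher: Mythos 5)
Your proposal follows essentially the same route as the paper's proof: the same matrix computation showing $(\mathbf{I}-\epsilon_1\mathbf{L})\,\delta\mathbf{x}[0]+\epsilon_1\Delta\mathbf{P}^T\mathbf{1}_N=\mathbf{0}$ (hence identical trajectories for all $k\geq 1$), and the same edge-by-edge comparison of the first-round messages $y_j^{(i)}[0]$, including the correct observation that the state and perturbation differences cancel for $j\in\{1,2\}$ precisely when the edge into $i$ exists. Those two computations are the heart of the paper's argument as well, just presented in the opposite order.

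There is, however, one step you omit that the paper does first and that your own argument leans on: verifying that the prescribed $\tilde{p}_i^{(j)}$'s are \emph{admissible} perturbations, i.e., that each row of $\tilde{\mathbf{P}}$ sums to zero, $\tilde{p}_i^{(i)}=-\sum_{j\in\mathcal{N}_i^{out}}\tilde{p}_i^{(j)}$, equivalently $\tilde{\mathbf{P}}\mathbf{1}_N=\mathbf{0}$. You invoke Theorem \ref{thm1_discrete} for ``both runs,'' but that theorem only applies to the second run if the second run is a legitimate execution of Algorithm \ref{alg:our_alg_discrete}, which is exactly what this condition guarantees; your parenthetical fallback (sum equality of initial states plus the trajectory identity) rescues the convergence claim, but the legitimacy of the second implementation is still indispensable, since the lemma asserts both scenarios are ``implementations of our algorithm'' and the downstream privacy argument in Theorem \ref{thm2_discrete} collapses if the alternative explanation is one the attacker could rule out as inconsistent with the algorithm's form (which belongs to $\Phi_m$). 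Moreover, this check is not pure bookkeeping: the row-1 sum vanishes because the graph is balanced ($d_1=d_1^{out}=a_{21}+\mathbf{1}^T\mathbf{A}_{31}$), and the row-2 sum equals $-(d_2-1-\mathbf{1}^T\mathbf{A}_{32})(\tilde{x}_2[0]-x_2[0])$, which vanishes only because $(2,1)\in\mathcal{E}$ forces $a_{12}=1$ — i.e., it is where the lemma's structural hypothesis that agent $2$ is an in-neighbor of agent $1$ actually enters. Adding this short verification makes your proof complete.
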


\begin{figure}[htp]
\centering
\includegraphics[width=0.68\linewidth,height=0.23\linewidth]{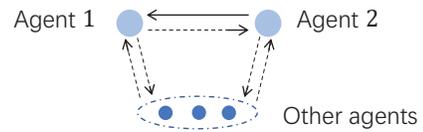}
\caption{The graph described in Lemma \ref{lemma_my_internal_discrete}. The dashed arrows means that these edges may or may not really exist, while the solid arrow means that this edge does exist.}
  \label{fig_graph_lemma_discrete}
\end{figure}

\begin{proof}
By \eqref{second_implementation_lemma_1_perturbation_discrete}, we see that $\mathbf{\tilde{P}}\mathbf{1}_N=0$ holds, implying that $\tilde{p}_i^{(j)}$s are a group of possible perturbation signals. Then by Theorem \ref{thm1_discrete}, in both implementations accurate average consensus can be guaranteed. Moreover, from \eqref{second_implementation_lemma_1_ini_discrete} we know $\mathbf{1}_N^T\mathbf{\tilde{x}}[0]=\mathbf{1}_N^T\mathbf{x}[0]$. Therefore, $\lim_{k\rightarrow\infty}x_i[k]=\tilde{x}_i[k]$ holds.

For any $i\in\mathcal{V}^3$, the key point to compare  $\Phi_i$ and $\tilde{\Phi}_i$ is to focus on the transmitted messages when $k=0$. We denote the message transmitted from agent $i$ to agent $j$ when $k=0$ in the two implementations as $y_i^{(j)}[0]$ and $\tilde{y}_i^{(j)}[0]$, respectively. Define $\delta y_i^{(j)}[0]=\tilde{y}_i^{(j)}[0]-y_i^{(j)}[0]$. By \eqref{second_implementation_lemma_1_ini_discrete} and \eqref{second_implementation_lemma_1_perturbation_discrete}, we have
\begin{equation}\label{delta_y_zero_except_(2,1) _discrete}
\delta {y}_i^{(j)}=\begin{cases}
  0, & \mbox{if } (i,j)\in\mathcal{E}\backslash(2,1), \\
  \frac{1}{\epsilon_1}(\tilde{x}_2[0]-x_2[0]), & \mbox{if } (i,j)=(2,1),
\end{cases}
\end{equation}
implying that for any $i\in\mathcal{V}^3$, all the messages it transmits and receives when $k=0$ are the same in the two implementations.

Now we focus on how the two systems evolve. Define $\delta\mathbf{x}=\mathbf{\tilde{x}}-\mathbf{x}$ and $\Delta \mathbf{P}=\mathbf{\tilde{P}}-\mathbf{P}$. By \eqref{iter_0_matrix_form_discrete}, \eqref{second_implementation_lemma_1_ini_discrete} and \eqref{second_implementation_lemma_1_perturbation_discrete}, we have
\begin{equation}\label{delta_x_lemma_1_discrete}
\begin{aligned}
&\delta \mathbf{x}[1]= \mathbf{\tilde{x}}[1]-\mathbf{x}[1]\\
=&(\mathbf{I}-\epsilon_1\mathbf{L})\delta\mathbf{x}[0]
  +\epsilon_1\Delta\mathbf{P}^T\mathbf{1}_N \\
=&\begin{bmatrix}
    1-\epsilon_1d_1 & \epsilon_1 & \epsilon_1\mathbf{A}_{13} \\
    \epsilon_1a_{21} & 1-\epsilon_1d_2 & \epsilon_1\mathbf{A}_{23} \\
    \epsilon_1\mathbf{A}_{31} & \epsilon_1\mathbf{A}_{32} & \mathbf{I}-\epsilon_1\mathbf{L}_{33}
  \end{bmatrix}
  \begin{bmatrix}
    \delta x_1[0] \\
    \delta x_2[0] \\
    \mathbf{0}
  \end{bmatrix}+ \\
+&\epsilon_1\begin{bmatrix}
    d_1\delta x_1[0] & -a_{21}\delta x_1[0] &  -\mathbf{A}_{31}^T\delta x_1[0] \\
    (\frac{1}{\epsilon_1}-1)\delta x_2[0] & (d_2-\frac{1}{\epsilon_1})\delta x_2[0] & -\mathbf{A}_{32}^T\delta x_2[0] \\
    \mathbf{0} & \mathbf{0} & \mathbf{0}
  \end{bmatrix}^T\mathbf{1}_N \\
=&\mathbf{0},
\end{aligned}
\end{equation}
implying that $\mathbf{\tilde{x}}[1]=\mathbf{x}[1]$. Hereafter, the two systems are exactly the same at every iteration, so does the transmitted messages in the two implementations from $k\geq 1$.

The proof is then completed.
\hfill $\blacksquare$
\end{proof}

\begin{remark}
As shown in Lemma \ref{lemma_my_internal_discrete}, the main difference between the two implementations is that $x_1[0]\neq\tilde{x}_1[0]$ and $x_2[0]\neq\tilde{x}_2[0]$. However, this difference does not influence those agents in $\mathcal{V}^3$ at all, as their trajectories of states and their collected information are unchanged. This fact actually implies that all agents in $\mathcal{V}^3$ cannot distinguish any variation of $x_1[0]$ and $x_2[0]$. For any agent $i\in\mathcal{V}^3$ and a given information set $\Phi_i$, there exists infinite number of possible initial conditions and corresponding perturbation signals for agent 1 and 2 which can lead to the same information set $\Phi_i$.
\end{remark}

\begin{theorem}\label{thm2_discrete}
For any agent $\upsilon\in\mathcal{V}$, the internal honest-but-curious agent $m$ can evaluate its initial value if and only
\begin{equation}\label{thm2_condition_discrete}
\begin{cases}
  &d_\upsilon =1,\\
  &(m,\upsilon),\mbox{ } (\upsilon,m)\in \mathcal{E}.
\end{cases}
\end{equation}
\end{theorem}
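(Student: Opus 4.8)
The plan is to establish the two implications separately: sufficiency by a direct two-equation inversion, and necessity (through its contrapositive) by invoking Lemma~\ref{lemma_my_internal_discrete} twice under different role assignments.

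For sufficiency, I would assume \eqref{thm2_condition_discrete}. Then $m$ is the unique in- and out-neighbor of $\upsilon$, so substituting $a_{\upsilon j}=0$ for $j\neq m$ and the local zero-sum relation $p_\upsilon^{(\upsilon)}=-p_\upsilon^{(m)}$ into \eqref{iter_0_discrete} yields
\begin{equation}\label{suff_eq_plan}
x_\upsilon[1]=(1-\epsilon_1)x_\upsilon[0]+\epsilon_1 y_m^{(\upsilon)}[0]-\epsilon_1 p_\upsilon^{(m)}.
\end{equation}
Because $(m,\upsilon),(\upsilon,m)\in\mathcal{E}$, agent $m$ knows $y_m^{(\upsilon)}[0]=x_m[0]+p_m^{(\upsilon)}$ and $\epsilon_1$, receives the scrambled value $y_\upsilon^{(m)}[0]=x_\upsilon[0]+p_\upsilon^{(m)}$, and observes $x_\upsilon[1]$. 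Reading \eqref{suff_eq_plan} together with $y_\upsilon^{(m)}[0]=x_\upsilon[0]+p_\upsilon^{(m)}$ as a linear system in the unknowns $(x_\upsilon[0],p_\upsilon^{(m)})$, its coefficient matrix has determinant $-1$, so $m$ recovers $x_\upsilon[0]$ uniquely regardless of $\epsilon_1$.

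For necessity I would argue the contrapositive. Observe that \eqref{thm2_condition_discrete} holds exactly when $m$ is both the sole in-neighbor and the sole out-neighbor of $\upsilon$ (using balancedness to turn $d_\upsilon=1$ into $d_\upsilon^{in}=d_\upsilon^{out}=1$); hence its failure forces $\upsilon$ to have an in-neighbor $w\neq m$ or an out-neighbor $w\neq m$. In the first case I apply Lemma~\ref{lemma_my_internal_discrete} with $\upsilon$ as agent~$1$ and $w$ as agent~$2$; since $m\in\mathcal{V}^3$, the two implementations satisfy $\tilde\Phi_m=\Phi_m$ while $\tilde x_\upsilon[0]=\tilde x_1[0]$ is free. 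In the second case I apply the same lemma with $w$ as agent~$1$ and $\upsilon$ as agent~$2$ (admissible since $\upsilon\in\mathcal{N}_w^{in}$); again $m\in\mathcal{V}^3$, so $\tilde\Phi_m=\Phi_m$, while now $\tilde x_\upsilon[0]=\tilde x_2[0]=x_w[0]+x_\upsilon[0]-\tilde x_1[0]$ ranges over $\mathbb{R}$ as $\tilde x_1[0]$ does. Either way $m$'s information is consistent with infinitely many values of $x_\upsilon[0]$, so $m$ cannot evaluate it.

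The crux is the out-neighbor branch of necessity. Lemma~\ref{lemma_my_internal_discrete} is built to disguise the target's initial state through its incoming solid edge, so it does not directly apply when $\upsilon$ transmits to, but never hears from, an agent other than $m$. The resolution is to reverse the roles, treating $\upsilon$'s out-neighbor as the nominal target so that $\upsilon$ occupies the agent~$2$ slot whose initial state is covertly perturbed. Beyond this observation the only bookkeeping is to confirm $m\notin\{\upsilon,w\}$ in each branch (immediate from $m\neq\upsilon$ and the choice $w\neq m$), which guarantees $m\in\mathcal{V}^3$ so that the lemma's indistinguishability conclusion genuinely covers the attacker $m$.
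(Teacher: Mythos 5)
Your proof is correct and follows essentially the same route as the paper's: the disclosure direction is the paper's equation \eqref{observer_discrete} recast as inverting a nonsingular $2\times 2$ system in $(x_\upsilon[0],p_\upsilon^{(m)})$, and the privacy direction is the paper's case split, invoking Lemma~\ref{lemma_my_internal_discrete} with $\upsilon$ as agent~1 when an in-neighbor $w\neq m$ exists and with $\upsilon$ as agent~2 when only an out-neighbor $w\neq m$ exists. Your treatment is in fact slightly more careful than the paper's, since you make explicit the role reversal behind the paper's ``following similar lines'' in Case~2 and verify $w\neq m$ so that $m\in\mathcal{V}^3$, which the paper leaves implicit.
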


\begin{proof}
(Necessity) When \eqref{thm2_condition_discrete} holds, agent $\upsilon$ can only communicate with the internal honest-but-curious agent $m$ (see Fig. \ref{fig_graph_fail_condition_discrete}). From Algorithm 1, we have
\begin{equation}\label{fail_thm_1_discrete}
\begin{aligned}
&y_\upsilon^{(m)}[0]=x_\upsilon[0]+p_\upsilon^{(m)}, \\
&x_\upsilon[1]=x_\upsilon[0]+\epsilon_1(y_m^{(\upsilon)}[0]-x_\upsilon[0])+\epsilon_1p_\upsilon^{(\upsilon)}.
\end{aligned}
\end{equation}
Combing \eqref{fail_thm_1_discrete} and $p_\upsilon^{(\upsilon)}=-p_\upsilon^{(m)}$, agent $m$ can compute $x_\upsilon[0]$ with the following equation:
\begin{equation}\label{observer_discrete}
x_\upsilon[0]=x_\upsilon[1]+\epsilon_1(y_\upsilon^{(m)}[0]-y_m^{(\upsilon)}[0]).
\end{equation}

(Sufficiency) Now we are going to show that agent $\upsilon$'s privacy is preserved if \eqref{thm2_condition_discrete} does not hold. Note that agent $m$'s information set $\Phi_m$ plays an important role in the analysis. By definition, $\Phi_m$ is the set of agent $m$'s collected information and is the only thing that agent $m$ can make use of to evaluate other agents' privacy. It is natural to see that if $\Phi_m$ can be exactly the same when agent $\upsilon$'s initial state is an arbitrary value, then there is no doubt that agent $m$ cannot distinguish between these possible initial values of agent $\upsilon$, implying that agent $m$ cannot uniquely determine the value of $x_\upsilon[0]$.

Note that $N\geq 3$ and the graph is strongly connected and balanced, if \eqref{thm2_condition_discrete} does not hold, at least one of the following two cases holds:

Case 1: There exists an agent $\omega$ such that $(\omega,\upsilon)\in\mathcal{E}$. We can regard agent $\omega$ and agent $\upsilon$ as the agent 2 and the agent 1 in Lemma \ref{lemma_my_internal_discrete}. It follows from Lemma \ref{lemma_my_internal_discrete} that all other agents' information sets, including agent $m$'s information set $\Phi_m$, can stay exactly unchanged even when agent $\upsilon$'s initial state changes from $x_\upsilon[0]$ to $\tilde{x}_\upsilon[0]$, where $\tilde{x}_\upsilon[0]$ can be arbitrarily chosen from $\mathbb{R}$. Thus, there exists infinite number of possible initial values of agent $\upsilon$ that can lead to the same information set $\Phi_m$. We then conclude that agent $m$ cannot reconstruct agent $\upsilon$'s initial state. In this sense, we say that agent $\upsilon$'s privacy is preserved under Algorithm \ref{alg:our_alg_discrete}.

Case 2: There exists an agent $\omega$ such that $(\upsilon,\omega)\in\mathcal{E}$. Following similar lines, we can prove that agent $\upsilon$'s privacy is preserved.

The proof is then completed.
\hfill $\blacksquare$
\end{proof}

\begin{figure}[htp]
\centering
\includegraphics[width=0.7\linewidth,height=0.35\linewidth]{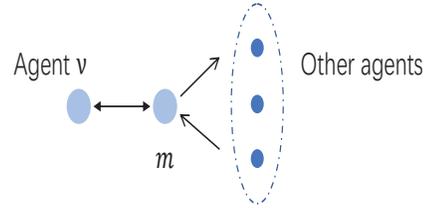}
\caption{The case where agent $\upsilon$'s privacy will be disclosed.}
  \label{fig_graph_fail_condition_discrete}
\end{figure}

\begin{remark}\label{remark_2_discrete}
Theorem \ref{thm2} states that an agent $i$'s privacy can be still preserved even when agent $i$ and all its neighbors are connected to the honest-but-curious agent, in which case the algorithms in \cite{Mo2016Privacy} and \cite{Manitara2013Privacy} will fail. Compared with \cite{Kia}, our algorithm performs better when honest-but-curious agents exist. In \cite{Kia}, an agent $i$'s privacy will be disclosed if and only if  $(i,m)\in\mathcal{E}$ and $\mathcal{N}_i^{in}\subset\mathcal{N}_m^{in}\bigcup\{m\}$. By Theorem \ref{thm2}, we can see that our algorithm is still valid even when the above condition in \cite{Kia} is violated. Therefore, our algorithm improves over those in the aforementioned related works.
\end{remark}

\begin{remark}
In this paper, the added perturbation signals are edge-based, meaning that each agent adds different perturbation signals to the information transmitted to different out-neighbors. Thus, our algorithm has more degrees of freedom on perturbation signal design than those in \cite{Mo2016Privacy,He2018privacy,Kia} where node-based method is used. It should be mentioned that
\cite{GuptaPrivacy} also proposed an edge-based algorithm, whose basic idea is to make sure that the distributions of the honest agents' initial states conditioned on their sum and the attackers' view are identical. The privacy analysis in this section is significantly different from the probability viewpoint in \cite{GuptaPrivacy}. Moreover, the privacy-preserving algorithm in the current paper is simpler than that in \cite{GuptaPrivacy} and can be applicable to a more broad class of network topologies which are not necessarily undirected.
\end{remark}

\subsection{Privacy Preservation Against External Eavesdroppers}
In this subsection, we consider another case that external eavesdroppers exist. Here we should define the information sets of these eavesdroppers. It is assumed the graph and all transmitted information between agents are accessible to them. And we assume that value of the parameter $\epsilon_1$ is not known by them. Denoted by $\Theta$ an external eavesdropper's information set, which is defined as
\begin{equation}\label{information_set_external_discrete}
\begin{aligned}
\Theta=\{&\mathcal{G} ;\mbox{ }\epsilon_2; \mbox{ } y_i^{(j)}[0],\forall (i,j)\in\mathcal{V};\mbox{ } x_i[k],\forall i\in\mathcal{V}, k\geq 1, \\
&the \mbox{ } form \mbox{ } of \mbox{ } Algorithm \mbox{ } \ref{alg:our_alg_discrete}\}.
\end{aligned}
\end{equation}

Similar with the last subsection, the following lemma is presented at first.

\begin{lemma}\label{lemma_my_external_discrete}
Assuming that there are two implementations of our algorithm. In the first one, agents' initial states are $\mathbf{x}[0]=(x_1[0],\cdots,x_N[0])^T$, the perturbation signals are $p_i^{(j)}$s, and parameters are $\epsilon_1$ and $\epsilon_2$. The initial conditions, perturbations and parameters of the second implementation are given as
\begin{equation}\label{second_implementation_lemma_2_discrete}
\begin{aligned}
\mathbf{\hat{x}}[0]&=\mathbf{x}[0]+\delta\epsilon(\mathbf{L}\mathbf{x}[0]-\mathbf{P}^T\mathbf{1}_N),\\
\hat{p}_i^{(j)}&=p_i^{(j)}-(\hat{x}_i[0]-x_i[0]),\mbox{ }\forall (i,j)\in\mathcal{E},\\
\hat{p}_i^{(i)}&=-\sum_{j\in\mathcal{N}_i^{out}}\hat{p}_i^{(j)}, \\
\hat{\epsilon}_1&=\epsilon_1+\delta\epsilon,~
\hat{\epsilon}_2=\epsilon_2,
\end{aligned}
\end{equation}
where $\delta\epsilon \in\mathbb{R}$ is a constant.

Then, in both implementations, accurate average consensus can be achieved and $\lim_{k\rightarrow\infty}x_i[k]=\hat{x}_i[k]$. Moreover, the external eavesdropper's information sets are exactly the same in the two implementations, i.e., $\Theta=\hat{\Theta}$.
\end{lemma}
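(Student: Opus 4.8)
The plan is to verify three things in turn: (i) the second implementation is a legitimate instance of Algorithm~\ref{alg:our_alg_discrete}, i.e.\ its perturbations satisfy the locally zero-sum constraint $\mathbf{\hat P}\mathbf{1}_N=\mathbf{0}$; (ii) both implementations converge to the same average; and (iii) the eavesdropper's information set is invariant, $\Theta=\hat\Theta$. For (i) and (ii) I can lean directly on Theorem~\ref{thm1_discrete}: since the off-diagonal $\hat p_i^{(j)}$ are defined for $(i,j)\in\mathcal{E}$ and the diagonal entries are defined by $\hat p_i^{(i)}=-\sum_{j\in\mathcal{N}_i^{out}}\hat p_i^{(j)}$, the row-sum condition holds by construction, so $\mathbf{\hat P}$ is a valid perturbation matrix and accurate average consensus is guaranteed. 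For the equality of limits I would check that $\mathbf{1}_N^T\mathbf{\hat x}[0]=\mathbf{1}_N^T\mathbf{x}[0]$: this reduces to showing $\mathbf{1}_N^T\big(\mathbf{L}\mathbf{x}[0]-\mathbf{P}^T\mathbf{1}_N\big)=0$, which follows since $\mathbf{1}_N^T\mathbf{L}=\mathbf{0}$ (balanced graph) and $\mathbf{1}_N^T\mathbf{P}^T\mathbf{1}_N=(\mathbf{P}\mathbf{1}_N)^T\mathbf{1}_N=0$.

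The heart of the proof is (iii). The information set $\Theta$ in \eqref{information_set_external_discrete} consists of the graph, $\epsilon_2$, the first-round messages $y_i^{(j)}[0]$, and all true states $x_i[k]$ for $k\ge 1$. I would first show the $k=0$ messages coincide. Writing $\hat y_i^{(j)}[0]=\hat x_i[0]+\hat p_i^{(j)}$ and using $\hat p_i^{(j)}=p_i^{(j)}-(\hat x_i[0]-x_i[0])$, the two $\delta$-terms telescope:
\begin{equation}\label{eq:external_y0}
\hat y_i^{(j)}[0]=\hat x_i[0]+p_i^{(j)}-(\hat x_i[0]-x_i[0])=x_i[0]+p_i^{(j)}=y_i^{(j)}[0],
\end{equation}
so every transmitted message at $k=0$ is identical across the two runs; this is precisely why $\hat\epsilon_1\ne\epsilon_1$ stays hidden, since the eavesdropper sees only $y_i^{(j)}[0]$ and not $\hat x_i[0]$ itself.

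It then remains to prove that the true states agree from $k=1$ onward. The crucial step is $\mathbf{\hat x}[1]=\mathbf{x}[1]$. Using the matrix form \eqref{iter_0_matrix_form_discrete} for the hatted system with parameter $\hat\epsilon_1=\epsilon_1+\delta\epsilon$ and perturbation $\mathbf{\hat P}$, I would compute
\begin{equation}\label{eq:external_x1}
\mathbf{\hat x}[1]=(\mathbf{I}-\hat\epsilon_1\mathbf{L})\mathbf{\hat x}[0]+\hat\epsilon_1\mathbf{\hat P}^T\mathbf{1}_N
\end{equation}
and show it equals $(\mathbf{I}-\epsilon_1\mathbf{L})\mathbf{x}[0]+\epsilon_1\mathbf{P}^T\mathbf{1}_N=\mathbf{x}[1]$. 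The substitution $\mathbf{\hat x}[0]=\mathbf{x}[0]+\delta\epsilon(\mathbf{L}\mathbf{x}[0]-\mathbf{P}^T\mathbf{1}_N)$ is engineered so that the extra $\delta\epsilon$ contributions cancel against the change in $\mathbf{\hat P}^T\mathbf{1}_N$; this is the main algebraic obstacle, and the expected difficulty is tracking the $\mathbf{\hat P}^T\mathbf{1}_N$ term, since $\mathbf{\hat P}$ differs from $\mathbf{P}$ in both its off-diagonal and diagonal entries. I anticipate that $\mathbf{\hat P}^T\mathbf{1}_N=\mathbf{P}^T\mathbf{1}_N-\mathbf{L}(\mathbf{\hat x}[0]-\mathbf{x}[0])$ or a similar identity emerges from the column-sum structure, and that everything collapses to $\mathbf{\hat x}[1]=\mathbf{x}[1]$. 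Once this single-step equality holds, both systems obey the identical rule \eqref{iter_from_1_discrete} with the same $\epsilon_2$, so $\mathbf{\hat x}[k]=\mathbf{x}[k]$ for all $k\ge1$ by induction; combined with \eqref{eq:external_y0} this gives $\Theta=\hat\Theta$ and completes the proof.
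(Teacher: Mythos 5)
Your route is the same as the paper's: check $\mathbf{\hat P}\mathbf{1}_N=\mathbf{0}$ and invoke Theorem \ref{thm1_discrete}, check the initial sums agree, show the $k=0$ messages coincide, then show $\mathbf{\hat x}[1]=\mathbf{x}[1]$ and conclude by induction. Your steps (i), (ii) and the telescoping computation $\hat y_i^{(j)}[0]=y_i^{(j)}[0]$ are correct and complete (the paper dismisses that last check as ``not difficult to verify''; you actually do it). However, the one step carrying all the content --- the single-step equality $\mathbf{\hat x}[1]=\mathbf{x}[1]$ --- is not proved: you only ``anticipate'' a cancellation identity, and the identity you write down has the wrong sign. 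Writing $\delta\mathbf{x}[0]=\mathbf{\hat x}[0]-\mathbf{x}[0]$, the correct column-sum identity is
\begin{equation*}
\mathbf{\hat P}^T\mathbf{1}_N=\mathbf{P}^T\mathbf{1}_N+\mathbf{L}\,\delta\mathbf{x}[0],
\end{equation*}
with a \emph{plus} sign, verified entrywise: the $j$-th column sum of $\mathbf{\hat P}$ gains $+d_j\,\delta x_j[0]$ from the diagonal entry $\hat p_j^{(j)}=p_j^{(j)}+d_j\,\delta x_j[0]$ and $-\delta x_i[0]$ from each in-neighbor $i$ of $j$, which is exactly $(\mathbf{L}\,\delta\mathbf{x}[0])_j$ under the paper's convention $l_{jj}=d_j$, $l_{ji}=-a_{ji}$. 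This is precisely the identity the paper establishes in \eqref{zero_term_lemma_2_discrete}.

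The sign matters. With the plus sign, substituting into your expression for $\mathbf{\hat x}[1]$ gives $\mathbf{\hat x}[1]=\mathbf{x}[0]+\delta\mathbf{x}[0]-\hat\epsilon_1(\mathbf{L}\mathbf{x}[0]-\mathbf{P}^T\mathbf{1}_N)$; since $\delta\mathbf{x}[0]=\delta\epsilon\,(\mathbf{L}\mathbf{x}[0]-\mathbf{P}^T\mathbf{1}_N)$ and $\hat\epsilon_1=\epsilon_1+\delta\epsilon$, the $\delta\epsilon$ contributions cancel and $\mathbf{\hat x}[1]=\mathbf{x}[1]$ follows. With your minus sign, the two $\mathbf{L}\,\delta\mathbf{x}[0]$ terms add instead of cancelling, leaving a residual $-2\hat\epsilon_1\mathbf{L}\,\delta\mathbf{x}[0]=-2\hat\epsilon_1\delta\epsilon\,\mathbf{L}(\mathbf{L}\mathbf{x}[0]-\mathbf{P}^T\mathbf{1}_N)$, which is nonzero in general, so the argument as written fails. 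The gap is genuine but local and easily repaired: carry out the entrywise column-sum computation instead of guessing its form; everything else in your proposal then goes through exactly as in the paper.
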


\begin{proof}
We first focus on convergence. It is easy to see that $\mathbf{\hat{P}}\mathbf{1}_N=0$ holds. Thus, accurate average consensus can be achieved in both implementations according to Theorem \ref{thm1_discrete}. Moreover, note that $\mathbf{1}_N\mathbf{x}[0]=\mathbf{1}_N\hat{\mathbf{x}}[0]$, we have $\lim_{k\rightarrow\infty}x_i[k]=\hat{x}_i[k]$.

Consider the transmitted messages at the first iteration ($k=0$), it is not difficult to verify from \eqref{second_implementation_lemma_2_discrete} that $\hat{y}_i^{(j)}[0]=y_i^{(j)}[0]$, $\forall (i,j)\in\mathcal{E}$ holds. It means that all transmitted messages at the first iteration are exactly the same in the two implementations. Then we consider the case that $k\geq 1$. Define $\delta\mathbf{x}=\mathbf{\hat{x}}-\mathbf{x}$ and $\Delta \mathbf{P}=\mathbf{\hat{P}}-\mathbf{P}$. By definition and in light of \eqref{iter_0_matrix_form_discrete}, we have
\begin{equation}\label{delta_x_lemma_2_discrete}
\begin{aligned}
&\delta \mathbf{x}[1]= \mathbf{\hat{x}}[1]-\mathbf{x}[1]\\
=& (\mathbf{I}-\hat{\epsilon}_1\mathbf{L})\mathbf{\hat{x}}[0]+\hat{\epsilon}_1\mathbf{\hat{P}}^T\mathbf{1}_N-
(\mathbf{I}-\epsilon_1\mathbf{L})\mathbf{x}[0]-\epsilon_1\mathbf{P}^T\mathbf{1}_N\\
=&(\mathbf{I}-\epsilon_1\mathbf{L}-\delta \epsilon\mathbf{L})(\mathbf{x}[0]+\delta \mathbf{x}[0])+
(\epsilon_1+\delta\epsilon)\\&\times (\mathbf{P}+\Delta \mathbf{P})^T\mathbf{1}_N-(\mathbf{I}-\epsilon_1\mathbf{L})\mathbf{x}[0]-\epsilon_1\mathbf{P}^T\mathbf{1}_N \\
=&\delta \mathbf{x}[0]-\delta \epsilon(\mathbf{L}\mathbf{x}[0]-\mathbf{P}^T\mathbf{1}_N)
-\hat{\epsilon}(\mathbf{L}\delta \mathbf{x}[0]-\Delta \mathbf{P}^T\mathbf{1}_N).
\end{aligned}
\end{equation}
It follows from \eqref{second_implementation_lemma_2_discrete} that
\begin{equation}\label{zero_term_lemma_2_discrete}
\begin{aligned}
&\mathbf{L}\delta \mathbf{x}[0]-\Delta \mathbf{P}^T\mathbf{1}_N \\
=& \mathbf{L}\delta \mathbf{x}[0] \\
&-\begin{bmatrix}
  d_1\delta x_1[0] & -a_{12}\delta x_2[0] & \cdots & -a_{1N}\delta x_N[0] \\
  -a_{21}\delta x_1[0] & d_2\delta x_2[0] & \cdots & -a_{2N}\delta x_N[0] \\
  \vdots & \vdots &  & \vdots  \\
  -a_{N1}\delta x_1[0] & -a_{2N}\delta x_2[0] & \cdots & d_N\delta x_N[0]
\end{bmatrix}\mathbf{1}_N \\
=&\mathbf{0}.
\end{aligned}
\end{equation}
Substituting \eqref{second_implementation_lemma_2_discrete} and \eqref{zero_term_lemma_2_discrete} into \eqref{delta_x_lemma_2_discrete}, we have
\begin{equation}\label{delta_x_lemma_2_equal_zero_discrete}
\delta \mathbf{x}[1]=\mathbf{0},
\end{equation}
implying that $\hat{x}_i[1]=x_i[1],\mbox{ } \forall i\in\mathcal{V}$. It means that from $k=1$, the two systems are exactly the same at every iteration. Thus, for $k\geq 1$, the transmitted signals are also identical in the two implementations.

Then it is easy to see $\Theta=\hat{\Theta}$ holds.
\hfill $\blacksquare$
\end{proof}

\begin{theorem}\label{thm3_discrete}
If all perturbation signals $p_i^{(j)}$s are randomly chosen from $\mathbb{R}$, under Algorithm \ref{alg:our_alg_discrete} all agents' privacy is preserved, in spite of the existence of external eavesdroppers.
\end{theorem}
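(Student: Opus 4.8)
The plan is to leverage Lemma~\ref{lemma_my_external_discrete} to exhibit, for each target agent, an entire one-parameter family of indistinguishable implementations whose initial states sweep out infinitely many values while the eavesdropper's information set $\Theta$ stays frozen. Lemma~\ref{lemma_my_external_discrete} already guarantees that for every $\delta\epsilon\in\mathbb{R}$ the perturbed run with $\hat{\epsilon}_1=\epsilon_1+\delta\epsilon$ and $\hat{\mathbf{x}}[0]=\mathbf{x}[0]+\delta\epsilon(\mathbf{L}\mathbf{x}[0]-\mathbf{P}^T\mathbf{1}_N)$ produces exactly the same $\Theta$. Since the external eavesdropper does not know $\epsilon_1$, it cannot tell the true run apart from any member of this family, so it is forced to admit every such $\hat{\mathbf{x}}[0]$ as a candidate initial condition.

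First I would fix an arbitrary target agent $\upsilon$ and isolate how its initial state moves inside the family. Writing $c_\upsilon=(\mathbf{L}\mathbf{x}[0]-\mathbf{P}^T\mathbf{1}_N)_\upsilon$, the $\upsilon$-th coordinate of the perturbed initial condition is the affine function $\hat{x}_\upsilon[0]=x_\upsilon[0]+\delta\epsilon\, c_\upsilon$. Hence, provided $c_\upsilon\neq 0$, letting $\delta\epsilon$ range over $\mathbb{R}\setminus\{-\epsilon_1\}$ (the exclusion keeps $\hat{\epsilon}_1\neq0$ as required by the algorithm) makes $\hat{x}_\upsilon[0]$ attain infinitely many distinct real values, all consistent with the same observed $\Theta$. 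Therefore the eavesdropper cannot single out $x_\upsilon[0]$, and $\upsilon$'s privacy is preserved, by the same indistinguishability reasoning used in the sufficiency part of Theorem~\ref{thm2_discrete}.

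The crux, and the step I expect to be the main obstacle, is establishing $c_\upsilon\neq 0$, which is precisely where the randomness hypothesis enters. I would expand $c_\upsilon=(\mathbf{L}\mathbf{x}[0])_\upsilon-(\mathbf{P}^T\mathbf{1}_N)_\upsilon=(\mathbf{L}\mathbf{x}[0])_\upsilon-p_\upsilon^{(\upsilon)}-\sum_{i\in\mathcal{N}_\upsilon^{in}}p_i^{(\upsilon)}$. Because the graph is strongly connected, $\mathcal{N}_\upsilon^{in}$ is nonempty; choosing any in-neighbor $i_0\in\mathcal{N}_\upsilon^{in}$, the perturbation $p_{i_0}^{(\upsilon)}$ enters $c_\upsilon$ with coefficient $-1$ and appears in none of the remaining terms (the Laplacian term depends only on the initial states, while $p_\upsilon^{(\upsilon)}$ depends only on $\upsilon$'s own outgoing perturbations). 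Thus $c_\upsilon$ is a nonconstant affine function of the randomly chosen perturbations, so $\{c_\upsilon=0\}$ is a measure-zero hyperplane in perturbation space and $\Pr[c_\upsilon=0]=0$.

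Finally I would close with a union bound over the $N$ agents: the event $\{c_\upsilon=0\text{ for some }\upsilon\in\mathcal{V}\}$ has probability zero, so almost surely $c_\upsilon\neq 0$ simultaneously for every agent. Combining this with the indistinguishability family from Lemma~\ref{lemma_my_external_discrete} shows that, with probability one, no agent's initial state can be reconstructed by the external eavesdropper, which is the assertion of Theorem~\ref{thm3_discrete}.
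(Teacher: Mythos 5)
Your proposal is correct and takes essentially the same route as the paper's own proof: both invoke Lemma~\ref{lemma_my_external_discrete} to generate the one-parameter family $\hat{\mathbf{x}}[0]=\mathbf{x}[0]+\delta\epsilon\,\eta$ with $\eta=\mathbf{L}\mathbf{x}[0]-\mathbf{P}^T\mathbf{1}_N$, and then argue that $\eta_\upsilon\neq 0$ almost surely under random perturbations, so infinitely many initial values of each agent are consistent with the same information set $\Theta$. Your expansion of $\eta_\upsilon$ as a nonconstant affine function of an in-neighbor's perturbation $p_{i_0}^{(\upsilon)}$, together with the union bound over agents, merely makes rigorous the probability-zero claim that the paper asserts without detail.
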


\begin{proof}
The proof of this theorem is similar to the proof of Theorem \ref{thm2_discrete}. The above Lemma \ref{lemma_my_external_discrete} shows that even when $\mathbf{x}[0]$ is changed to $\mathbf{x}[0]+\delta\epsilon(\mathbf{L}\mathbf{x}[0]-\mathbf{P}^T\mathbf{1}_N)$, the information set of the external attacker may stay unchanged. Let $\eta=[\eta_1,\cdots,\eta_N]^T=\mathbf{L}\mathbf{x}[0]-\mathbf{P}^T\mathbf{1}_N$. Then for any agent $i$, even if its initial state changes from $x_i[0]$ to $x_i[0]+\delta\epsilon\cdot \eta_i$, it is possible that $\Theta$ is not influenced at all. Note that $\delta\epsilon\in\mathbb{R}$ is an arbitrary value, then if $\eta_i\neq 0$, $x_i[0]+\delta\epsilon\cdot \eta_i$ can be any arbitrary value in $\mathbb{R}$. In this case, there exists infinite number of possible initial values of agent $i$ and the external eavesdropper cannot reconstruct the value of $x_i[0]$ so agent $i$'s privacy is preserved. Note that when all $p_i^{(j)}$s are randomly chosen from $\mathbb{R}$, the possibility of $\eta_i=0$ is zero. The proof is then completed. \hfill $\blacksquare$
\end{proof}

\begin{remark}\label{remark_4_discrete}
It should be noted that many works based on injecting correlated-noise like \cite{Mo2016Privacy} and \cite{He2018privacy} are vulnerable to the external eavesdropper attackers. By contrast, our algorithm is valid when they exist. It is also worth noting that those works that rely on cryptology, e.g., there in \cite{WangCryptography} and \cite{YuezuLv2019accurate}, can also deal with this type of attackers well, as it is hard for the attackers to decrypt the encrypted messages. However, these methods suffer from huge costs on computation and communication, while the algorithm in this paper is light-weight.
\end{remark}


\section{Continuous-time Privacy Preserving Updating Rule and Privacy Analysis}
This section considers continuous-time privacy preserving average consensus, which is the continuous-time counterpart of the discrete-time algorithm in Section 3.

Our continuous-time privacy-preserving average consensus algorithm is divided into two phases, one of which is for obfuscation and the other is for reaching consensus. Assume that all agents know a specific instant $t_0\in(0,\infty)$. When $t\in [0,t_0]$, each agent generates some perturbation signals and then adds them to the edges leading out of it. Specifically, for each agent $i$, if $(i,j)\in\mathcal{E}$, then agent $i$ generates a bounded perturbation signal $p_i^{(j)}(t):[0,t_0]\rightarrow \mathbb{R}$ and transmit $y_i^{(j)}(t)=x_i(t)+p_i^{(j)}(t)$ to its out-neighbor, agent $j$. There are some constraints on those perturbation signals, and we refer to them as admissible perturbation signals.

\begin{definition}
We say a group of perturbation signals $p_i^{(j)}(t):[0,t_0]\rightarrow \mathbb{R}$,  generated by agent $i$, is admissible, if
 \romannumeral1) If $j\notin \mathcal{N}_i^{out}\cup\{i\}$, $p_i^{(j)}(t)\equiv 0$;
 \romannumeral2) All $p_i^{(j)}(t)$s are bounded and continuous on the interval $[0,t_0]$; \romannumeral3) $p_i^{(i)}(t)=-\sum_{j\in\mathcal{N}_i^{out}}p_i^{(j)}(t)$, $\forall t\in [0,t_0]$, $\forall i\in\mathcal{V}$.
\end{definition}

When $t\in[0,t_0]$, all the agents update their states as
\begin{equation}\label{x_update_with_perturbation_continuous}
\begin{aligned}
\dot{x}_i(t)=-c_1\sum_{j=1}^{N}a_{ij}(x_i(t)-y_j^{(i)}(t))+c_1 p_i^{(i)}(t),
\end{aligned}
\end{equation}
where $c_1\in(-\infty,0)\cup(0,\infty)$ is a constant that is known to all agents. Then, when $t>t_0$, each agent updates following the normal rule in Lemma \ref{lemma_normal_rule_continuous}, i.e.,
\begin{equation}\label{x_update_after_perturbation_continuous}
\dot{x}_i(t)=-c_2\sum_{j=1}^Na_{ij}(x_i(t)-x_j(t)), \mbox{ }\forall t > t_0,
\end{equation}
where $c_2>0$ is a constant.

We summarize the proposed algorithm in Algorithm \ref{alg:our_alg_continuous}.

\begin{algorithm}[htb]
\caption{Privacy-Preserving Average Consensus Algorithm.}
\label{alg:our_alg_continuous}
\begin{algorithmic}[1] 
\STATE Each agent $i$ generates a group of admissible perturbation signals $p_i^{(j)}(t)$s,  $\forall j\in\mathcal{N}_i^{out}\bigcup\{i\}$;
\end{algorithmic}
\hspace*{0.02in} {\bf When $t\leq t_0$:}
\begin{algorithmic}[1] 
\STATE Agent $i$ transmits $y_i^{(j)}(t)=x_i(t)+p_i^{(j)}(t)$ to agent $j$, $\forall j\in\mathcal{N}_i^{out}$;
\STATE Each agent $i$ updates its state according to \dref{x_update_with_perturbation_continuous};
\end{algorithmic}
\hspace*{0.02in} {\bf When $t> t_0$:}
\begin{algorithmic}[1] 
\STATE All agents update via the normal rule \dref{x_update_after_perturbation_continuous}.
\end{algorithmic}
\end{algorithm}


\begin{theorem}\label{thm1}
Accurate average consensus can be achieved by Algorithm 1.
\end{theorem}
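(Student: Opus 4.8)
The plan is to mirror the two-phase argument already used for the discrete-time Theorem \ref{thm1_discrete}. First I would recast the perturbation-phase dynamics \eqref{x_update_with_perturbation_continuous} into a compact matrix form. Substituting $y_j^{(i)}(t)=x_j(t)+p_j^{(i)}(t)$ and collecting the perturbation terms, the per-agent equation becomes $\dot{x}_i(t)=-c_1(\mathbf{L}\mathbf{x}(t))_i+c_1\big(\sum_{j}a_{ij}p_j^{(i)}(t)+p_i^{(i)}(t)\big)$. Since $p_j^{(i)}(t)$ is nonzero only when $(j,i)\in\mathcal{E}$, precisely the condition $a_{ij}=1$, the bracketed quantity is exactly the $i$-th column sum of the perturbation matrix $\mathbf{P}(t)=[p_i^{(j)}(t)]$, so the whole system reads
\[
\dot{\mathbf{x}}(t)=-c_1\mathbf{L}\mathbf{x}(t)+c_1\mathbf{P}(t)^T\mathbf{1}_N,\qquad t\in[0,t_0],
\]
which is the continuous-time analogue of \eqref{iter_0_matrix_form_discrete}. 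Boundedness and continuity of the $p_i^{(j)}(t)$ (admissibility condition \romannumeral2)) guarantee that this linear ODE has a well-defined continuous solution on $[0,t_0]$.

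The key step is to verify that this phase leaves the total state sum invariant. Differentiating along the flow gives
\[
\frac{d}{dt}\big(\mathbf{1}_N^T\mathbf{x}(t)\big)=-c_1\mathbf{1}_N^T\mathbf{L}\mathbf{x}(t)+c_1\mathbf{1}_N^T\mathbf{P}(t)^T\mathbf{1}_N.
\]
Because the graph is balanced (Assumption 1), $\mathbf{L}^T\mathbf{1}_N=\mathbf{0}$, i.e.\ $\mathbf{1}_N^T\mathbf{L}=\mathbf{0}$, which annihilates the first term. The admissibility condition \romannumeral3), $p_i^{(i)}(t)=-\sum_{j\in\mathcal{N}_i^{out}}p_i^{(j)}(t)$, is exactly the statement $\mathbf{P}(t)\mathbf{1}_N=\mathbf{0}$, so the second term equals $(\mathbf{P}(t)\mathbf{1}_N)^T\mathbf{1}_N=0$ as well. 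Hence the derivative vanishes for every $t\in[0,t_0]$, and therefore $\mathbf{1}_N^T\mathbf{x}(t_0)=\mathbf{1}_N^T\mathbf{x}(0)$.

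Finally, for $t>t_0$ the agents run the unperturbed rule \eqref{x_update_after_perturbation_continuous}, so Lemma \ref{lemma_normal_rule_continuous} applies with initial condition $\mathbf{x}(t_0)$ and drives every state to the average $\frac{1}{N}\mathbf{1}_N^T\mathbf{x}(t_0)$; combining this with the invariance just established, the common limit equals $\frac{1}{N}\mathbf{1}_N^T\mathbf{x}(0)$, the true average of the original initial states, which completes the argument. I do not expect a genuine obstacle here, since the proof is essentially a transcription of the discrete case; the only point requiring care is the bookkeeping that collapses the edge-based perturbation terms into the column sum $\mathbf{P}(t)^T\mathbf{1}_N$, together with noting that admissibility conditions \romannumeral2) and \romannumeral3) are precisely what is needed to secure, respectively, existence of the solution on $[0,t_0]$ and conservation of the state sum.
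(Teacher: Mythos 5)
Your proposal is correct and follows essentially the same route as the paper's own proof: write the scrambling phase in matrix form $\dot{\mathbf{x}}(t)=-c_1\mathbf{L}\mathbf{x}(t)+c_1\mathbf{P}^T(t)\mathbf{1}_N$, use the balanced-graph property $\mathbf{1}_N^T\mathbf{L}=\mathbf{0}$ together with the admissibility condition $\mathbf{P}(t)\mathbf{1}_N=\mathbf{0}$ to show the state sum is invariant on $[0,t_0]$, and then invoke Lemma \ref{lemma_normal_rule_continuous} for $t>t_0$. The only additions (the explicit column-sum bookkeeping and the remark on existence of solutions) are harmless elaborations of steps the paper leaves implicit.
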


\begin{proof}
Firstly, we prove that the coordinated scrambling phase does not change the sum of agents' states.
Define the perturbation matrix as $\mathbf{P}(t)=[p_i^{(j)}(t)]$.
From the fact that $p_i^{(i)}(t)=-\sum_{j\in\mathcal{N}_i^{out}}p_i^{(j)}(t)$, we have
\begin{equation}\label{P_zero_sum_continuous}
\mathbf{P}(t)\mathbf{1}_N=\mathbf{0},\mbox{ }\forall t\in [0, t_0].
\end{equation}
The updating rule \dref{x_update_with_perturbation_continuous} can be written in the matrix form as
\begin{equation}\label{x_update_matrix__continuous}
\mathbf{\dot{x}}(t)=-c_1\mathbf{L}\mathbf{x}(t)+c_1\mathbf{P}^T(t)\mathbf{1}_N, \mbox{ }\forall t\in[0,t_0],
\end{equation}
where $\mathbf{x}(t)=[x_1(t),\cdots,x_N(t)]^T$. It then follows from the properties of balanced graphs and \eqref{P_zero_sum_continuous}, \eqref{x_update_matrix__continuous} that
\begin{equation}\label{sum_invariant_continuous}
\begin{aligned}
\mathbf{1}_N^T \dot{\mathbf{x}}(t)
&=-c_1\mathbf{1}_N^T\mathbf{L}\mathbf{x}(t)
+c_1\mathbf{1}_N^T\mathbf{P}^T(t)\mathbf{1}_N \\&=0, \mbox{ }\forall t\in[0,t_0],
\end{aligned}
\end{equation}
implying that the sum of all agents' states stays invariant. So we have $\mathbf{1}_N^T \mathbf{x}(t_0)=\mathbf{1}_N^T \mathbf{x}(0)$.

From $t_0$, a normal average consensus updating rule is applied, so it follows from Lemma \ref{lemma_normal_rule_continuous} that accurate average consensus can be achieved.\hfill $\blacksquare$
\end{proof}


\subsection{Privacy Preservation Against Honest-but-curious Agents}
In this subsection, we are going to evaluate how our Algorithm \ref{alg:our_alg_continuous} performs when there exists an internal honest-but-curious agent in the network. Similar with the discrete-time case, we consider the case where there exists only one honest-but-curious agent $m$. And we assume that every internal agent (curious or not) can store all the information transmitted to it and knows the graph topology, and they all have access to the parameters $c_1$ and $c_2$. Thus, an agent $i$'s information set $\mathcal{S}_i$ is
\begin{equation}\label{information_set_internal_continuous}
\begin{aligned}
\mathcal{S}_i=&\{\mathcal{G};c_1,c_2,t_0;y_j^{(i)}(t),j\in\mathcal{N}_i^{in},t\leq t_0;x_j(t),\\
&j\in\mathcal{N}_i^{in},t>t_0; x_i(t),t\geq 0;p_i^{(j)}(t),j\in \mathcal{N}_i^{out},\\
&t\leq t_0; \mbox{ } the \mbox{ } form \mbox{ } of \mbox{ } Algorithm \mbox{ } \ref{alg:our_alg_continuous}\}.
\end{aligned}
\end{equation}

Before moving forward, we partition the perturbation matrix $\mathbf{P}(t)$ to subblock matrices as
\begin{equation}\label{L_and_P}
\begin{aligned}
  \mathbf{P}(t)&=\begin{bmatrix}
                p_1^{(1)}(t) & p_1^{(2)}(t)& \mathbf{P}_1^{(3)}(t) \\
                p_2^{(1)}(t) & p_2^{(2)}(t) & \mathbf{P}_2^{(3)}(t) \\
                \mathbf{P}_3^{(1)}(t) & \mathbf{P}_3^{(2)}(t) & \mathbf{P}_3^{(3)}(t)
              \end{bmatrix},
\end{aligned}
\end{equation}
and we then present the following lemma.

\begin{lemma}\label{lemma_internal_continuous}
Consider two implementations of Algorithm 1. In the first implementation, agents' initial states are $x_1(0),x_2(0),\mathbf{x}_3(0)$ and the generated perturbation signals are $p_i^{(j)}(t)$s. In the second implementation, agents' initial states and system parameters are
\begin{equation}\label{ini_second_implementation_internal_continuous}
\begin{aligned}
&\tilde{x}_1(0)\in\mathbb{R},\mbox{ } \tilde{x}_2(0)=x_1(0)+x_2(0)-\tilde{x}_1(0),\\
&\mathbf{\tilde{x}}_3(0)=\mathbf{x}_3(0),\\
&\tilde{c}_1=c_1,\tilde{c}_2=c_2,
\end{aligned}
\end{equation}
and perturbation signals are given as
\begin{equation}\label{ps_second_implementation_internal_continuous}
\begin{aligned}
\tilde{p}_1^{(2)}(t)=&p_1^{(2)}(t)-a_{21}s(t),  \\
\mathbf{\tilde{P}}_1^{(3)}(t)=&\mathbf{P}_1^{(3)}(t)
-\mathbf{A}_{31}^Ts(t),\\
\tilde{p}_1^{(1)}(t)=&p_1^{(1)}(t)+d_1s(t),\\
\tilde{p}_2^{(1)}(t)=&p_2^{(1)}(t)-\frac{e^{-c_1t_0}(\tilde{x_1}(0)-x_1(0))}{1-e^{-c_1t_0}}, \\
\mathbf{\tilde{P}}_2^{(3)}(t)=&\mathbf{P}_2^{(3)}(t)
+\mathbf{A}_{32}^Ts(t),\\
\tilde{p}_2^{(2)}(t)=&p_2^{(2)}(t)-(d_2-1)s(t) \\
&+\frac{e^{-c_1t_0}(\tilde{x_1}(0)-x_1(0))}{1-e^{-c_1t_0}}, \\
\mathbf{\tilde{P}}_3^{(j)}(t)=&\mathbf{P}_3^{(j)}(t), j=1,2,3,
\end{aligned}
\end{equation}
where $s(t)$ is given as
\begin{equation}\label{s_fuc}
s(t)=\frac{e^{-c_1t}-e^{-c_1t_0}}{1-e^{-c_1t_0}}\cdot(\tilde{x_1}(0)-x_1(0)).
\end{equation}

Then, accurate average consensus can be achieved in both implementations and $\lim_{t\rightarrow\infty}x_i(t)=\tilde{x}_i(t)$. Moreover, if agent $i\in\mathcal{V}^3$, then in both implementations agent $i$'s information sets are the same, i.e., $\mathcal{S}_i=\tilde{\mathcal{S}}_i$.
\end{lemma}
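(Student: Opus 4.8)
The plan is to follow the blueprint of the discrete-time Lemma \ref{lemma_my_internal_discrete}, adapted to the ODE dynamics \eqref{x_update_matrix__continuous}. First I would dispose of the convergence claim. A direct check of \eqref{ps_second_implementation_internal_continuous} shows that all row sums of $\tilde{\mathbf{P}}(t)$ vanish, so $\tilde{\mathbf{P}}(t)\mathbf{1}_N=\mathbf{0}$ and the $\tilde{p}_i^{(j)}(t)$ form an admissible family (boundedness and continuity being inherited from $s(t)$ and the constant modifications); by Theorem \ref{thm1} both implementations reach accurate average consensus. Since \eqref{ini_second_implementation_internal_continuous} gives $\tilde{x}_1(0)+\tilde{x}_2(0)=x_1(0)+x_2(0)$ and $\tilde{\mathbf{x}}_3(0)=\mathbf{x}_3(0)$, we have $\mathbf{1}_N^T\tilde{\mathbf{x}}(0)=\mathbf{1}_N^T\mathbf{x}(0)$, so the two common limits coincide. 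Already the row-sum verification uses the balanced-graph identities $\sum_{i\in\mathcal{V}^3}a_{i1}=d_1-a_{21}$ and $\sum_{i\in\mathcal{V}^3}a_{i2}=d_2-1$, which I would record at the outset.

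The core of the argument concerns the difference system. Writing $\delta\mathbf{x}=\tilde{\mathbf{x}}-\mathbf{x}$ and $\Delta\mathbf{P}=\tilde{\mathbf{P}}-\mathbf{P}$, subtracting the two copies of \eqref{x_update_matrix__continuous} gives, on $[0,t_0]$,
\[
\dot{\delta\mathbf{x}}(t)=-c_1\mathbf{L}\,\delta\mathbf{x}(t)+c_1\Delta\mathbf{P}^T(t)\mathbf{1}_N,
\]
with $\delta\mathbf{x}(0)=[\,\delta x_1(0),\,-\delta x_1(0),\,\mathbf{0}\,]^T$, where $\delta x_1(0)=\tilde{x}_1(0)-x_1(0)$. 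The key step is to propose the candidate trajectory $\delta\mathbf{x}(t)=[\,s(t),\,-s(t),\,\mathbf{0}\,]^T$, with $s(t)$ as in \eqref{s_fuc}, and verify it by direct substitution. Here $s(t)$ is the continuous-time analogue of the discrete one-step cancellation: it satisfies the scalar ODE $\dot{s}(t)=-c_1 s(t)-c_1\alpha$ with $\alpha=\frac{e^{-c_1t_0}}{1-e^{-c_1t_0}}\delta x_1(0)$, together with the boundary values $s(0)=\delta x_1(0)$ and $s(t_0)=0$. By uniqueness of solutions of this linear ODE, the candidate is the actual trajectory; in particular $\delta x_i(t)\equiv 0$ for every $i\in\mathcal{V}^3$, and $\delta\mathbf{x}(t_0)=\mathbf{0}$ since $s(t_0)=0$.

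Verifying this candidate is where the main effort lies, and it is the step I expect to be the main obstacle. I would compute $\Delta\mathbf{P}^T(t)\mathbf{1}_N$ component-wise from \eqref{ps_second_implementation_internal_continuous}, obtaining $d_1 s(t)-\alpha$ in the first slot, $-(a_{21}+d_2-1)s(t)+\alpha$ in the second, and $(a_{i2}-a_{i1})s(t)$ in each $\mathcal{V}^3$-slot, and likewise evaluate $\mathbf{L}\,\delta\mathbf{x}(t)$ under the candidate using the partition \eqref{L_and_P}, getting $(d_1+1)s(t)$, $-(a_{21}+d_2)s(t)$, and $(a_{i2}-a_{i1})s(t)$ respectively. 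The $\mathcal{V}^3$ rows then cancel identically, while the rows for agents $1$ and $2$ both reduce, after inserting $\dot{s}=-c_1(s+\alpha)$, to the single scalar identity; the delicate point is the bookkeeping that the time-independent term $\alpha$ (from the constant modifications $\tilde{p}_2^{(1)},\tilde{p}_2^{(2)}$) and the exponential part of $s$ balance exactly. Once $\delta\mathbf{x}(t_0)=\mathbf{0}$ is established, for $t>t_0$ both systems obey $\dot{\mathbf{x}}=-c_2\mathbf{L}\mathbf{x}$ from the same state, so $\delta\mathbf{x}(t)\equiv\mathbf{0}$ thereafter.

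Finally I would read off $\mathcal{S}_i=\tilde{\mathcal{S}}_i$ for $i\in\mathcal{V}^3$ term by term. The parameters, topology, and form of the algorithm are identical; agent $i$'s own states $x_i(t)$ are unchanged because $\delta x_i(t)\equiv 0$ for all $t\geq 0$; its own perturbations are unchanged because $\Delta\mathbf{P}_3^{(j)}\equiv\mathbf{0}$; and the states of its in-neighbors for $t>t_0$ agree since $\delta\mathbf{x}\equiv\mathbf{0}$ there. The one genuinely algorithm-specific check is the received messages on $[0,t_0]$: for $j\in\mathcal{N}_i^{in}$ one has $\delta y_j^{(i)}(t)=\delta x_j(t)+\Delta p_j^{(i)}(t)$, which vanishes because the edge-perturbation modification cancels the state difference --- explicitly $s(t)-a_{i1}s(t)=0$ when $j=1$ and $-s(t)+a_{i2}s(t)=0$ when $j=2$ (using $a_{i1}=a_{i2}=1$ on the relevant edges), while $j\in\mathcal{V}^3$ gives $0+0$. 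This yields $\mathcal{S}_i=\tilde{\mathcal{S}}_i$ and completes the proof.
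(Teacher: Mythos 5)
Your proposal is correct and follows essentially the same route as the paper's own proof: establish admissibility and sum-invariance for convergence, then verify by substitution that $\delta\mathbf{x}(t)=[\,s(t),\,-s(t),\,\mathbf{0}\,]^T$ solves the difference system $\dot{\delta\mathbf{x}}=-c_1\mathbf{L}\delta\mathbf{x}+c_1\Delta\mathbf{P}^T\mathbf{1}_N$ (so $\delta\mathbf{x}(t_0)=\mathbf{0}$ and the systems coincide after $t_0$), and finally check that every message seen by an agent in $\mathcal{V}^3$ is unchanged, the edge-perturbation modifications cancelling the state differences. If anything, your write-up is slightly more careful than the paper's at two points: you make the admissibility row-sum check explicit via the balanced-graph identities, and you invoke uniqueness of solutions of the linear ODE where the paper merely says the claimed solution ``is not difficult to verify.''
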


\begin{proof}
Note that $s(t)$ is  bounded and for any $i\in\mathcal{V}$, $\tilde{p}_i^{(i)}(t)=\sum_{j\in\mathcal{N}_i^{out}}\tilde{p}_i^{(j)}(t)$ holds. Using Theorem \ref{thm1}, accurate average consensus can be achieved in both implementations. Moreover, we can easily get from \eqref{ini_second_implementation_internal_continuous} that $\mathbf{1}_N^T\tilde{\mathbf{x}}(0)=\mathbf{1}_N^T\mathbf{x}(0)$, meaning that  $\lim_{t\rightarrow\infty}x_i(t)=\tilde{x}_i(t)$.

Define two variables $\delta x_i(t)=\tilde{x}(t)-x_i(t)$, $\delta p_i^{(j)}(t)=\tilde{p}_i^{(j)}(t)-p_i^{(j)}(t)$ and let $\Delta \mathbf{x}(t)=[\delta x_1(t),\cdots,\delta x_N(t)]^T$, $\Delta\mathbf{P}(t)=[\delta p_i^{(j)}(t)]$. By \eqref{x_update_matrix__continuous}, we have
\begin{equation}\label{delta_x_eq_continuous}
\Delta\mathbf{\dot{x}}(t)=-c_1\mathbf{L}\Delta\mathbf{ x}(t)
+c_1\Delta\mathbf{ P}^T(t)\mathbf{1}_N,\mbox{ } \forall t\in[0,t_0].
\end{equation}
Substituting \eqref{ini_second_implementation_internal_continuous} and \eqref{ps_second_implementation_internal_continuous} into \eqref{delta_x_eq_continuous}, we obtain that $\Delta\mathbf{ x}(t)$ satisfies the following equation:
\begin{equation}\label{delta_x_dy_continuous}
\begin{aligned}
\begin{bmatrix}
  \delta \dot{x}_1(t) \\
  \delta \dot{x}_2(t) \\
  \Delta\mathbf{\dot{x}}_3(t)
\end{bmatrix}
=c_1
\begin{bmatrix}
-d_1 & 1 & \mathbf{A}_{13} \\
 a_{21} & - d_2 & \mathbf{A}_{23} \\
\mathbf{A}_{31} & \mathbf{A}_{32} & -\mathbf{L}_{33}
\end{bmatrix}
\begin{bmatrix}
  \delta x_1(t) \\
  \delta x_2(t) \\
  \Delta\mathbf{ x}_3(t)
\end{bmatrix}  \\
+c_1\begin{bmatrix}
   d_1s(t)-\frac{e^{-c_1t_0}\delta{x}_1(0)}{1-e^{-c_1t_0}}\\
   -(a_{21}+d_2-1)s(t)+\frac{e^{-c_1t_0}\delta{x}_1(0)}{1-e^{-c_1t_0}}\\
   -\mathbf{A}_{31}s(t)+\mathbf{A}_{32}s(t)
 \end{bmatrix}
\end{aligned}
\end{equation}
with the following initial condition:
\begin{equation}\label{ini_delta_x_continuous}
\begin{aligned}
\delta x_1(0)\in\mathbb{R},\delta x_2(0)=-\delta x_1(0),
\Delta\mathbf{ x}_3(0)=\mathbf{0}.
\end{aligned}
\end{equation}
It is not difficult to verify that
\begin{equation}\label{delta_x_exp_continuous}
\begin{aligned}
\delta x_1(t)&=s(t),\delta x_2(t)=-s(t),\\
\Delta\mathbf{ x}_3(t)&=0,\quad \forall t\in[0,t_0].
\end{aligned}
\end{equation}
is  the solution to \eqref{delta_x_dy_continuous} with the initial condition \eqref{ini_delta_x_continuous}.

Now we are ready to compare $\mathcal{S}_i$ to $\tilde{\mathcal{S}}_i$ ($i\in\mathcal{V}^3$). The key point is to analyze the difference between $\tilde{y}_i^{(j)}(t)$ and $y_i^{(j)}(t)$. Note that when $t\in[0,t_0]$, the transmitted message from agent $i$ to agent $j$ is $y_i^{(j)}(t)=x_i(t)+p_i^{(j)}(t)$, $\forall (i,j)\in\mathcal{E}$. It follows from \eqref{ps_second_implementation_internal_continuous} and \eqref{delta_x_exp_continuous} that $\forall t\in[0,t_0]$,
\begin{equation}\label{y_diff_continuous}
\begin{aligned}
\tilde{y}_i^{(j)}(t)=\begin{cases}
y_i^{(j)}(t), & \mbox{if } (i,j)\in\mathcal{E}\backslash(2,1),\\
y_2^{(1)}(t)-\frac{e^{-c_1t}\delta x_1(0)}{1-e^{-c_1t_0}}, & \mbox{if } (i,j)=(2,1).
\end{cases}
\end{aligned}
\end{equation}
Equation \eqref{y_diff_continuous} shows that when $t\in[0,t_0]$, only the messages transmitted from agent 2 to agent 1 are different in the two implementations. From $t=t_0$, no perturbation signals are injected, so all agents would send their real states to their out-neighbors. Note that from \eqref{delta_x_exp_continuous} we have
\begin{equation}\label{t0_same_internal_continuous}
\delta x_1(t_0)=s(t_0)=0, \mbox{ }\delta x_2(t_0)=-s(t_0)=0,
\end{equation}
implying that $\mathbf{\tilde{x}}(t_0)=\mathbf{x}(t_0)$. Then it is obvious that from $t=t_0$, $\mathbf{\tilde{x}}(t)=\mathbf{x}(t)$ always holds for all $t\geq t_0$. This means that all the messages transmitted in the two scenarios are exactly the same since $t=t_0$.

To sum up, when $t\in[0,t_0]$, all messages in the network except the one transmitted from agent 2 to agent 1 are identical in the two implementations. When $t>t_0$, there is no difference between those transmitted messages in the two scenarios. Thus, $\mathcal{S}_i=\tilde{\mathcal{S}}_i, \forall i\in\mathcal{V}^3$.\hfill $\blacksquare$
\end{proof}

Now we are ready to analyze Algorithm \ref{alg:our_alg_continuous}'s privacy property against internal attackers.

\begin{theorem}\label{thm2}
Let $\upsilon$ be an agent in the network, then the internal honest-but-curious agent $m$ can reconstruct its initial value if and only if the following condition holds:
\begin{equation}\label{thm2_condition}
\begin{aligned}
&d_\upsilon =1,  \\
&(m,\upsilon), (\upsilon,m)\in \mathcal{E}.
\end{aligned}
\end{equation}
\end{theorem}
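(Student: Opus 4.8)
The plan is to mirror the structure of the discrete-time Theorem \ref{thm2_discrete}, splitting into necessity and sufficiency. For necessity, I would assume condition \eqref{thm2_condition} holds, so agent $\upsilon$ has degree one and its only neighbor is the honest-but-curious agent $m$. In that case the only edge leaving $\upsilon$ is $(\upsilon,m)$, so admissibility forces $p_\upsilon^{(m)}(t)=-p_\upsilon^{(\upsilon)}(t)$ for all $t\in[0,t_0]$. Agent $m$ receives $y_\upsilon^{(m)}(t)=x_\upsilon(t)+p_\upsilon^{(m)}(t)$ throughout $[0,t_0]$ and also transmits $y_m^{(\upsilon)}(t)$, so it knows every quantity appearing in the scalar dynamics \eqref{x_update_with_perturbation_continuous} for agent $\upsilon$, namely $\dot{x}_\upsilon(t)=-c_1(x_\upsilon(t)-y_m^{(\upsilon)}(t))+c_1 p_\upsilon^{(\upsilon)}(t)$. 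I would show that agent $m$ can reconstruct $x_\upsilon(0)$ by integrating this one-dimensional linear ODE backward: since $y_m^{(\upsilon)}(t)$ is known to $m$ (it sent it) and $p_\upsilon^{(\upsilon)}(t)=-p_\upsilon^{(m)}(t)=x_\upsilon(t)-y_\upsilon^{(m)}(t)$ can be expressed through the received signal, the forcing term is fully known, and the variation-of-constants formula $x_\upsilon(0)=e^{c_1 t}x_\upsilon(t)-\int_0^{t}e^{c_1(t-\tau)}c_1\bigl(y_m^{(\upsilon)}(\tau)+p_\upsilon^{(\upsilon)}(\tau)\bigr)\,d\tau$ recovers the initial state. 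This is the continuous-time analogue of \eqref{observer_discrete}.

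For sufficiency I would argue the contrapositive exactly as in the discrete case: if \eqref{thm2_condition} fails, then because $N\ge 3$ and the graph is strongly connected and balanced, at least one of two cases holds, namely either there is an agent $\omega\neq m$ with $(\omega,\upsilon)\in\mathcal{E}$, or there is an agent $\omega\neq m$ with $(\upsilon,\omega)\in\mathcal{E}$. In the first case I would invoke Lemma \ref{lemma_internal_continuous} directly, identifying agent $\upsilon$ with agent $1$ and agent $\omega$ with agent $2$ (this is legitimate since $\omega$ is an in-neighbor of $\upsilon$, so the solid edge $(2,1)=(\omega,\upsilon)\in\mathcal{E}$ required by the lemma's construction exists). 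The lemma then produces a second implementation with $\tilde{x}_\upsilon(0)$ an arbitrary real number while keeping the information set $\mathcal{S}_i$ of every $i\in\mathcal{V}^3$, and in particular $\mathcal{S}_m$, unchanged. Hence $m$ cannot distinguish among infinitely many initial values of $\upsilon$, so $\upsilon$'s privacy is preserved.

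The main obstacle is the second case, where $\upsilon$ has an out-neighbor $\omega\neq m$ but possibly no qualifying in-neighbor, so Lemma \ref{lemma_internal_continuous} cannot be applied by relabeling $\upsilon$ as agent $1$. I would need an analogue of Lemma \ref{lemma_internal_continuous} in which the roles are reversed — relabeling $\upsilon$ as agent $2$ and the out-neighbor $\omega$ as agent $1$ — exploiting that the solid edge in the lemma runs from agent $2$ to agent $1$, which is precisely $(\upsilon,\omega)\in\mathcal{E}$. The construction \eqref{ps_second_implementation_internal_continuous} already treats agents $1$ and $2$ asymmetrically, so I would re-examine it to confirm that perturbing $\tilde{x}_2(0)=\tilde{x}_\upsilon(0)$ arbitrarily (with $\tilde{x}_1(0)$ compensating to preserve the sum) again leaves every $\mathcal{S}_i$ for $i\in\mathcal{V}^3$ invariant, with only the single edge $(\upsilon,\omega)$ carrying a difference that is absorbed by $\omega$'s own perturbation design. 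Verifying that the compensating signal $s(t)$ and the boundary condition $s(t_0)=0$ still force $\Delta\mathbf{x}(t_0)=\mathbf{0}$, so that the two trajectories coincide for all $t>t_0$, is the delicate step; once established, the same indistinguishability argument closes the proof, and I would remark that $m\in\mathcal{V}^3$ in both cases precisely because $\{\,\omega,\upsilon\,\}$ excludes $m$.
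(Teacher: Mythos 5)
Your overall architecture (explicit reconstruction for necessity, indistinguishability via Lemma \ref{lemma_internal_continuous} for sufficiency) is the same as the paper's, but your necessity step fails as written. The formula you propose, $x_\upsilon(0)=e^{c_1 t}x_\upsilon(t)-\int_0^{t}e^{c_1(t-\tau)}c_1\bigl(y_m^{(\upsilon)}(\tau)+p_\upsilon^{(\upsilon)}(\tau)\bigr)d\tau$, is not computable by agent $m$: the integrand contains $p_\upsilon^{(\upsilon)}(\tau)$, which $m$ never observes, and rewriting it as $x_\upsilon(\tau)-y_\upsilon^{(m)}(\tau)$ only re-introduces the unknown trajectory $x_\upsilon(\tau)$ on $[0,t_0)$ --- the very thing $m$ is trying to infer. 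So your claim that ``the forcing term is fully known'' is circular. The repair is to perform the substitution inside the differential equation rather than inside the variation-of-constants formula: since $p_\upsilon^{(\upsilon)}(t)=x_\upsilon(t)-y_\upsilon^{(m)}(t)$, the unknown state cancels and \eqref{x_update_with_perturbation_continuous} collapses to $\dot{x}_\upsilon(t)=c_1\bigl(y_m^{(\upsilon)}(t)-y_\upsilon^{(m)}(t)\bigr)$, whose right-hand side consists exclusively of the signal $m$ sent and the signal $m$ received. Direct integration over $[0,t_0]$ --- no exponential kernel at all --- together with the observation (which you omit) that $x_\upsilon(t_0)$ becomes known to $m$ once the perturbations stop at $t_0$, gives exactly the paper's reconstruction \eqref{fail_val_continuous}. (Separately, your kernel is also algebraically off: inverting $x_\upsilon(t)=e^{-c_1t}x_\upsilon(0)+\int_0^t e^{-c_1(t-\tau)}c_1(\cdot)\,d\tau$ yields a factor $e^{c_1\tau}$, not $e^{c_1(t-\tau)}$; this becomes moot once the cancellation is used.)

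On sufficiency you are essentially correct, and Case 1 is verbatim the paper's argument. Your ``main obstacle'' in Case 2, however, is not an obstacle: no reversed-role analogue of Lemma \ref{lemma_internal_continuous} needs to be stated or re-proved. The lemma already perturbs both endpoints of the edge $(2,1)$ simultaneously --- $\tilde{x}_1(0)$ arbitrary and $\tilde{x}_2(0)=x_1(0)+x_2(0)-\tilde{x}_1(0)$ --- while guaranteeing $\mathcal{S}_i=\tilde{\mathcal{S}}_i$ for every $i\in\mathcal{V}^3$. Applying it verbatim with agent $1=\omega$ and agent $2=\upsilon$ (legitimate, since the required edge from agent $2$ to agent $1$ is precisely $(\upsilon,\omega)\in\mathcal{E}$), the arbitrariness of $\tilde{x}_1(0)$ makes $\tilde{x}_2(0)=\tilde{x}_\upsilon(0)$ range over all of $\mathbb{R}$, and since $m\in\mathcal{V}^3$ its information set is unchanged; indistinguishability of $x_\upsilon(0)$ follows at once. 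That is all the paper's ``can be similarly shown'' amounts to; your plan to re-derive $s(t)$ and re-verify $\Delta\mathbf{x}(t_0)=\mathbf{0}$ would merely reconfirm what the lemma already asserts.
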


\begin{proof}
(Necessity) If \eqref{thm2_condition} holds, then the honest-but-curious agent $m$ is the only agent that communicates with agent $\upsilon$ (see Fig. \ref{fig_graph_fail_condition_discrete}). For $t\leq t_0$, we have
\begin{equation}\label{fail_dy_continuous}
\dot{x}_\upsilon(t)=c_1(y_m^{(\upsilon)}(t)-x_\upsilon(t))+c_1p_\upsilon^{(\upsilon)}(t),
\end{equation}
and
\begin{equation}\label{fail_trans_continuous}
y_\upsilon^{(m)}(t)=x_\upsilon(t)+p_\upsilon^{(m)}(t)=x_\upsilon(t)-p_\upsilon^{(\upsilon)}(t).
\end{equation}
When $t>t_0$ no perturbations are added, so agent $m$ knows $x_\upsilon(t_0)$. Combining \eqref{fail_dy_continuous} and \eqref{fail_trans_continuous}, the honest-but-curious agent $m$ can compute $x_\upsilon(0)$ by
\begin{equation}\label{fail_val_continuous}
x_\upsilon(0)=x_\upsilon(t_0)-c_1\int_{0}^{t_0}[y_m^{(\upsilon)}(\tau)-y_\upsilon^{(m)}(\tau)]d\tau .
\end{equation}

(Sufficiency) We are going to prove that if \eqref{thm2_condition} does not hold, then the attacker $m$ cannot reconstruct agent $\upsilon$'s initial value. The proof is similar with the proof of Theorem \ref{thm2_discrete}. As $\mathcal{S}_m$ is the set of all the information that agent $m$ can access and is the only thing that it can make use of to evaluate other agents' initial states, it plays an important role in privacy analysis. If $\mathcal{S}_m$ can be exactly the same when agent $\upsilon$'s initial state is changed, it is impossible for agent $m$ to uniquely determine the value of $x_\upsilon(0)$.

Due to the fact that $N\geq 3$ and in light of Assumption 1, when \eqref{thm2_condition} does not hold, at least one of the following two cases is true.

Case 1: There exists another agent $\omega$ such that $\omega\in \mathcal{N}_\upsilon^{in}$, i.e., $(\omega,\upsilon)\in\mathcal{E}$. In this case, let agent $\omega$ be the agent 2 and agent $\upsilon$ be the agent 1 in Lemma \ref{lemma_internal_continuous}. From Lemma \ref{lemma_internal_continuous}, we know that all other agents' information sets, including agent $m$'s information set $\mathcal{S}_m$, can stay exactly unchanged even when agent $\upsilon$'s initial state changes from $x_\upsilon(0)$ to $\tilde{x}_\upsilon(0)$, where $\tilde{x}_\upsilon(0)$ can be arbitrarily chosen in $\mathbb{R}$. Thus, agent $m$ cannot reconstruct agent $\upsilon$'s initial state, implying that agent $\upsilon$'s privacy is preserved.

Case 2: There exists another agent $\omega$ such that $\omega\in \mathcal{N}_\upsilon^{out}$, i.e. , $(\upsilon,\omega)\in\mathcal{E}$. It can be similarly shown that Algorithm \ref{alg:our_alg_continuous} preserves agent $\upsilon$'s privacy.

The proof is then completed.\hfill $\blacksquare$
\end{proof}

\begin{remark}
\cite{Kia} also considered the privacy preserving average consensus problem in continuous-time case. As we have already pointed out, our algorithm performs better on privacy preservation against honest-but-curious agents. It is also worth mentioning that another feature of Algorithm \ref{alg:our_alg_continuous} is that the perturbations are only added in a finite time interval $[0,t_0]$, while they are added at every moment in \cite{Kia}.
\end{remark}

\subsection{Privacy Preservation Against External Eavesdroppers}

Algorithm \ref{alg:our_alg_continuous}'s privacy preserving performance against external eavesdroppers is another issue worthy of concern. The information sets of these eavesdroppers need to be first defined. Assume the graph and all transmitted information among neighboring agents are accessible but the parameter $c_1$ is invisible to them.
For an external eavesdropper, its information set is given by
\begin{equation}\label{information_set_external_continuous}
\begin{aligned}
\mathcal{M}=\{&\mathcal{G};t_0,c_2;y_i^{(j)}(t),\forall (i,j)\in\mathcal{E}, \forall t\in[0,t_0];\\
&x_i(t),\forall i\in\mathcal{V},\forall t>t_0; \\
&the \mbox{ } form \mbox{ } of \mbox{ } Algorithm \mbox{ } \ref{alg:our_alg_continuous}\}.
\end{aligned}
\end{equation}

Before giving the main result, we first present a lemma.

\begin{lemma}\label{lemma_external_continuous}
Consider two implementations of Algorithm \ref{alg:our_alg_continuous}. The first implementation's initial conditions are $\mathbf{x}(0)=(x_1(0),\cdots,x_N(0))^T$, the perturbation signals are $p_i^{(j)}(t)$s, and parameters are $c_1$ and $c_2$. In the second implementation, the initial condition is
\begin{equation}\label{ini_second_implementation_external_continuous}
\begin{aligned}
\mathbf{\hat{x}}(0)&=\mathbf{x}(0)+\delta c\cdot\mathbf{\xi},
\end{aligned}
\end{equation}
and perturbation signals and parameters are
\begin{equation}\label{ps_second_implementation_external_continuous}
\begin{aligned}
\hat{p}_i^{(j)}(t)&=p_i^{(j)}(t)-h_i(t),\mbox{ }\forall (i,j)\in\mathcal{E},\\
\hat{p}_i^{(i)}(t)&=-\sum_{j\in\mathcal{N}_i^{out}}\hat{p}_i^{(j)}(t)=p_i^{(i)}(t)+d_ih_i(t), \\
\hat{c}_1&=c_1+\delta c,
\hat{c}_2=c_2,
\end{aligned}
\end{equation}
where $\delta c \in\mathbb{R}$ is a constant, $\mathbf{\xi}\in\mathbb{R}^N$ is a vector and $h_i(t)$ denotes the i-th item of $\mathbf{h}(t)$, which are given as
\begin{equation}\label{xi_exp_continuous}
\begin{aligned}
\mathbf{\xi}=&-\frac{1}{c_1}[e^{-c_1\mathbf{L}t_0}\mathbf{x}(0)\\
&+c_1\int_0^{t_0}e^{-c_1\mathbf{L}(t_0-\tau)}\mathbf{P}^T(\tau)\mathbf{1}_Nd\tau-\mathbf{x}(0)],
\end{aligned}
\end{equation}
\begin{equation}\label{ht_exp_continuous}
\begin{aligned}
\mathbf{h}(t)=&\delta c \cdot\mathbf{\xi}+\frac{\delta c}{c_1}[e^{-c_1\mathbf{L}t}\mathbf{x}(0) \\
&+c_1\int_0^{t}e^{-c_1\mathbf{L}(t-\tau)}\mathbf{P}^T(\tau)\mathbf{1}_Nd\tau-\mathbf{x}(0)].
\end{aligned}
\end{equation}
Then, in both implementations, accurate average consensus can be achieved and $\lim_{t\rightarrow\infty}x_i(t)=\hat{x}_i(t)$. Moreover, the external eavesdropper's information sets are exactly the same, i.e., $\mathcal{M}=\hat{\mathcal{M}}$.
\end{lemma}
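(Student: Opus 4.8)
The plan is to mirror the proof of Lemma~\ref{lemma_my_external_discrete}: first confirm that the hatted run is a legitimate execution of Algorithm~\ref{alg:our_alg_continuous} reaching the same consensus value, and then show that the coordinated change of $c_1$, the initial state, and the perturbations leaves every quantity visible to the eavesdropper untouched. For the convergence claim, note that by construction $\hat{p}_i^{(i)}(t)=-\sum_{j\in\mathcal{N}_i^{out}}\hat{p}_i^{(j)}(t)$, so $\hat{\mathbf{P}}(t)\mathbf{1}_N=\mathbf{0}$; since $\mathbf{x}(t)$ is continuous and bounded on the compact interval $[0,t_0]$, the $h_i(t)$ and hence the $\hat{p}_i^{(j)}(t)$ are admissible, and Theorem~\ref{thm1} gives accurate average consensus for the hatted run. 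To match the limits, I would observe that the bracketed term in \eqref{xi_exp_continuous} equals $\mathbf{x}(t_0)-\mathbf{x}(0)$, so $\mathbf{\xi}=-\frac{1}{c_1}\big(\mathbf{x}(t_0)-\mathbf{x}(0)\big)$; the sum invariance from \eqref{sum_invariant_continuous} gives $\mathbf{1}_N^T\mathbf{x}(t_0)=\mathbf{1}_N^T\mathbf{x}(0)$, whence $\mathbf{1}_N^T\mathbf{\xi}=0$ and $\mathbf{1}_N^T\hat{\mathbf{x}}(0)=\mathbf{1}_N^T\mathbf{x}(0)$, yielding $\lim_{t\to\infty}x_i(t)=\hat{x}_i(t)$.

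The core of the argument is to prove that the state discrepancy $\delta\mathbf{x}(t)=\hat{\mathbf{x}}(t)-\mathbf{x}(t)$ coincides with $\mathbf{h}(t)$ on $[0,t_0]$. Writing $\Delta\mathbf{P}=\hat{\mathbf{P}}-\mathbf{P}$ and subtracting the two copies of \eqref{x_update_matrix__continuous} with $\hat{c}_1=c_1+\delta c$, I would collect terms into
\begin{equation}
\dot{\delta\mathbf{x}}(t)=-\hat{c}_1\mathbf{L}\,\delta\mathbf{x}(t)+\hat{c}_1\Delta\mathbf{P}^T(t)\mathbf{1}_N-\delta c\,\big(\mathbf{L}\mathbf{x}(t)-\mathbf{P}^T(t)\mathbf{1}_N\big).
\end{equation}
The decisive algebraic step is the identity $\Delta\mathbf{P}^T(t)\mathbf{1}_N=\mathbf{L}\,\mathbf{h}(t)$: the $j$-th column sum of $\Delta\mathbf{P}$ equals $d_jh_j(t)-\sum_{i\in\mathcal{N}_j^{in}}h_i(t)$, which is exactly the $j$-th entry of $\mathbf{L}\mathbf{h}(t)$.

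Next I would verify that $\mathbf{h}(t)$ solves this linear ODE with the correct initial data. The bracketed term in \eqref{ht_exp_continuous} equals $\mathbf{x}(t)-\mathbf{x}(0)$, so $\mathbf{h}(t)=\delta c\,\mathbf{\xi}+\frac{\delta c}{c_1}\big(\mathbf{x}(t)-\mathbf{x}(0)\big)$; differentiating gives $\dot{\mathbf{h}}(t)=\frac{\delta c}{c_1}\dot{\mathbf{x}}(t)$, while \eqref{x_update_matrix__continuous} gives $\mathbf{L}\mathbf{x}(t)-\mathbf{P}^T(t)\mathbf{1}_N=-\frac{1}{c_1}\dot{\mathbf{x}}(t)$. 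Substituting $\delta\mathbf{x}=\mathbf{h}$ into the displayed ODE, the two $\hat{c}_1\mathbf{L}\mathbf{h}$ terms cancel and the remaining right-hand side is $\frac{\delta c}{c_1}\dot{\mathbf{x}}(t)=\dot{\mathbf{h}}(t)$, so $\mathbf{h}$ is indeed a solution; since $\mathbf{h}(0)=\delta c\,\mathbf{\xi}=\hat{\mathbf{x}}(0)-\mathbf{x}(0)=\delta\mathbf{x}(0)$, uniqueness forces $\delta\mathbf{x}(t)=\mathbf{h}(t)$ on $[0,t_0]$. I would then read off the information sets. For $t\in[0,t_0]$ and every $(i,j)\in\mathcal{E}$,
\begin{equation}
\hat{y}_i^{(j)}(t)=\hat{x}_i(t)+\hat{p}_i^{(j)}(t)=\big(x_i(t)+h_i(t)\big)+\big(p_i^{(j)}(t)-h_i(t)\big)=y_i^{(j)}(t),
\end{equation}
so every scrambling-phase message is unchanged. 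Because $\mathbf{h}(t_0)=\delta c\,\mathbf{\xi}+\frac{\delta c}{c_1}(\mathbf{x}(t_0)-\mathbf{x}(0))=\mathbf{0}$, we get $\hat{\mathbf{x}}(t_0)=\mathbf{x}(t_0)$; as $\hat{c}_2=c_2$ and both runs obey \eqref{x_update_after_perturbation_continuous} thereafter, $\hat{\mathbf{x}}(t)=\mathbf{x}(t)$ for all $t>t_0$, so the states broadcast after $t_0$ also agree, giving $\mathcal{M}=\hat{\mathcal{M}}$.

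The step I expect to be the main obstacle is the column-sum identity $\Delta\mathbf{P}^T(t)\mathbf{1}_N=\mathbf{L}\mathbf{h}(t)$ and the cancellation it triggers, since this is what makes $\mathbf{h}$ solve the discrepancy dynamics; once it is in place the uniqueness argument closes everything, and the only remaining analytic care is in justifying the boundedness and continuity needed for admissibility and for uniqueness on $[0,t_0]$.
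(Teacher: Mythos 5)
Your proposal is correct and follows essentially the same route as the paper's proof: establish admissibility and sum-invariance (via $\mathbf{\xi}=-\tfrac{1}{c_1}(\mathbf{x}(t_0)-\mathbf{x}(0))$) for convergence, derive the discrepancy ODE for $\delta\mathbf{x}=\hat{\mathbf{x}}-\mathbf{x}$, conclude $\delta\mathbf{x}(t)=\mathbf{h}(t)$ on $[0,t_0]$, and then match messages before and after $t_0$. The only difference is one of detail: where the paper simply asserts that solving the linear ODE yields $\Delta\mathbf{x}(t)=\mathbf{h}(t)$, you make this explicit through the column-sum identity $\Delta\mathbf{P}^T(t)\mathbf{1}_N=\mathbf{L}\mathbf{h}(t)$ (which implicitly uses balancedness, $d_j^{in}=d_j^{out}$) together with substitution and uniqueness, which is a welcome clarification rather than a different argument.
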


\begin{proof}
From \eqref{ps_second_implementation_external_continuous}, it is obvious that for each $i\in\mathcal{V}$, $\hat{p}_i^{(j)}(t)$s are also admissible perturbation signals. By Theorem \ref{thm1}, average consensus can be achieved in both implementations. Moreover, it can be verified by \eqref{xi_exp_continuous} that $\mathbf{\xi}=-\frac{1}{c_1}(\mathbf{x}(t_0)-\mathbf{x}(0))$. Thus, we have
\begin{equation}\label{ini_same_sum_external_continuous}
\begin{aligned}
\mathbf{1}_N^T\mathbf{\hat{x}}(0)&=\mathbf{1}_N^T\mathbf{x}(0)+\delta c\cdot\mathbf{1}_N^T\mathbf{\xi}\\
&=\mathbf{1}_N^T\mathbf{x}(0)-\frac{\delta c}{c_1}\mathbf{1}_N^T[\mathbf{x}(t_0)-\mathbf{x}(0)] \\
&=\mathbf{1}_N^T\mathbf{x}(0),
\end{aligned}
\end{equation}
where we have used the fact that $\mathbf{1}_N^T\mathbf{x}(t_0)=\mathbf{1}_N^T\mathbf{x}(0)$. \eqref{ini_same_sum_external_continuous} implies that $\lim_{t\rightarrow\infty}x_i(t)=\hat{x}_i(t)$.

Now we are going to prove that $\mathcal{M}=\hat{\mathcal{M}}$. The proof consists of two steps.

Step 1: we are going to show that when $t\in[0,t_0]$, all the transmitted information are identical in the two implementations. Let $\Delta \mathbf{x}(t)=\mathbf{\hat{x}}(t)-\mathbf{x}(t)$ and $\Delta \mathbf{P}(t)=\mathbf{\hat{P}}(t)-\mathbf{P}(t)$. Then the dynamic equation of $\Delta \mathbf{x}(t)$ is
\begin{equation}\label{delta_x_dy_external_continuous}
\begin{aligned}
\Delta \mathbf{\dot{x}}(t)=&-\hat{c}_1\mathbf{L}\mathbf{\hat{x}}(t)+\hat{c_1}\mathbf{\hat{P}}^T(t)\mathbf{1}_N \\
&+c_1\mathbf{L}\mathbf{x}(t)-c_1\mathbf{P}^T(t)\mathbf{1}_N \\
=&\hat{c}_1(-\mathbf{L}\Delta \mathbf{x}(t)+\Delta\mathbf{P}^T(t)\mathbf{1}_N)\\
&+\delta c(-\mathbf{L}\mathbf{x}(t)+\mathbf{P}^T(t)\mathbf{1}_N),
\end{aligned}
\end{equation}
with the initial condition
\begin{equation}\label{ini_delta_x_external_continuous}
\Delta \mathbf{x}(0)=\delta c\cdot \mathbf{\xi}.
\end{equation}

Note that from \eqref{ht_exp_continuous} we have $h(t)=\frac{\delta c}{c_1}[\mathbf{x}(t)-\mathbf{x}(t_0)]$. By solving the linear ordinary differential equation \eqref{delta_x_dy_external_continuous} with the initial condition \eqref{ini_delta_x_external_continuous}, we can obtain that
\begin{equation}\label{delta_x_exp_external_continuous}
\Delta \mathbf{x}(t)=\mathbf{h}(t).
\end{equation}

Combining \eqref{ps_second_implementation_external_continuous} and \eqref{delta_x_exp_external_continuous}, we obtain that for every $(i,j)\in\mathcal{E}$,
\begin{equation}\label{same_trans_1_external_continuous}
\begin{aligned}
\hat{y}_i^{(j)}(t)=&\hat{x}_i(t)+\hat{p}_i^{(j)}(t) \\
=&x_i(t)+\delta x_i(t)+p_i^{(j)}(t)-h_i(t)\\=&y_i^{(j)}(t),\mbox{ } \forall t\in[0,t_0],
\end{aligned}
\end{equation}
which means that when $t\in[0,t_0]$, all the transmitted information are identical in the two implementations.

Step 2: we are going to analyze the case where $t> t_0$. From \eqref{ht_exp_continuous} and \eqref{delta_x_exp_external_continuous} we have $\Delta \mathbf{x}(t_0)=\mathbf{0}$, i.e., $\mathbf{\hat{x}}(t_0)=\mathbf{x}(t_0)$. Then by $\hat{c}_2=c_2$, from $t=t_0$ the two systems are exactly the same at every moment so that there is no difference between the transmitted information in these two implementations.\hfill $\blacksquare$
\end{proof}

\begin{theorem}\label{thm3}
Let $\mathcal{F}$ be the set of all possible bounded and continuous functions defined on $[0,t_0]$, i.e., $\mathcal{F}=\{p(t):[0,t_0]\rightarrow \mathbb{R}|p(t)\mbox{ } is \mbox{ } bounded \mbox{ } and \mbox{ } continuous\}$. If all agents randomly choose the perturbation signals $p_i^{(j)}(t)$s $(j\in\mathcal{N}_i^{out})$ from $\mathcal{F}$, Algorithm \ref{alg:our_alg_continuous} can preserve all agents' privacy against external eavesdroppers, in the sense that the external eavesdroppers cannot construct any agent's initial value.
\end{theorem}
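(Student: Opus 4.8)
The plan is to mirror the structure of the proof of Theorem~\ref{thm3_discrete}, since Theorem~\ref{thm3} is the continuous-time analogue. The core engine is Lemma~\ref{lemma_external_continuous}, which produces, for an arbitrary $\delta c\in\mathbb{R}$, a second implementation with initial state $\mathbf{\hat{x}}(0)=\mathbf{x}(0)+\delta c\cdot\mathbf{\xi}$ that is completely indistinguishable to the external eavesdropper, i.e.\ $\mathcal{M}=\hat{\mathcal{M}}$. First I would fix an arbitrary target agent $i$ and set $\xi_i$ to be its entry of the vector $\mathbf{\xi}$ defined in \eqref{xi_exp_continuous}. Under the second implementation, agent $i$'s initial value becomes $x_i(0)+\delta c\cdot\xi_i$, and this entire family of implementations (indexed by $\delta c$) yields one and the same information set $\mathcal{M}$ for the eavesdropper. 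Hence the eavesdropper is confronted with a family of candidate initial conditions all consistent with what it observes.

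The key step is then to argue that, whenever $\xi_i\neq 0$, the value $x_i(0)+\delta c\cdot\xi_i$ ranges over all of $\mathbb{R}$ as $\delta c$ sweeps over $\mathbb{R}$. Consequently there are infinitely many candidate initial values for agent $i$ consistent with the eavesdropper's view, so $x_i(0)$ cannot be uniquely reconstructed and agent $i$'s privacy is preserved. This reduces the theorem to showing that $\xi_i\neq 0$ with probability one under the random choice of the perturbation signals. Recalling from the proof of Lemma~\ref{lemma_external_continuous} that $\mathbf{\xi}=-\frac{1}{c_1}(\mathbf{x}(t_0)-\mathbf{x}(0))$, and using the solution of \eqref{x_update_matrix__continuous}, one has
\begin{equation}\label{xi_explicit_thm3}
\mathbf{\xi}=-\frac{1}{c_1}\Bigl[(e^{-c_1\mathbf{L}t_0}-\mathbf{I})\mathbf{x}(0)+c_1\int_0^{t_0}e^{-c_1\mathbf{L}(t_0-\tau)}\mathbf{P}^T(\tau)\mathbf{1}_N\,d\tau\Bigr],
\end{equation}
so $\xi_i$ is an affine functional of the perturbation signals $\{p_i^{(j)}(\cdot)\}$ through the integral term. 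When the $p_i^{(j)}(t)$s are drawn randomly from $\mathcal{F}$, the event $\xi_i=0$ is a single linear constraint on an infinite-dimensional function space, hence a measure-zero (degenerate) event; one therefore concludes $\xi_i\neq 0$ almost surely. Running this argument over all $i\in\mathcal{V}$ finishes the proof.

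The main obstacle I anticipate is making the ``probability of $\xi_i=0$ is zero'' claim rigorous. In the discrete-time Theorem~\ref{thm3_discrete} the analogous quantity $\eta_i$ is a finite linear combination of the scalar perturbations $p_i^{(j)}$, so genericity is immediate from a standard Lebesgue-measure argument in $\mathbb{R}^n$. Here, by contrast, $\xi_i$ depends on the perturbations through a time integral against the matrix exponential kernel, so the ``random choice from $\mathcal{F}$'' must be interpreted as a genuine distribution on the function space for the measure-zero statement to have content. I would handle this by observing that it suffices to vary a single scalar degree of freedom in some $p_i^{(j)}(\cdot)$ whose integral-kernel coefficient is nonzero: fixing all other signals, $\xi_i$ becomes a nonconstant affine function of that one scalar parameter, which therefore vanishes for at most one value of it: a measure-zero event. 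This keeps the genericity argument one-dimensional and sidesteps the need for a heavy infinite-dimensional measure-theoretic apparatus, matching the informal level of rigor used in Theorem~\ref{thm3_discrete}.
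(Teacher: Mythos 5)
Your proposal is correct and follows essentially the same route as the paper's own proof: it invokes Lemma~\ref{lemma_external_continuous} to obtain a $\delta c$-indexed family of implementations that are indistinguishable to the eavesdropper, concludes that $x_i(0)$ cannot be reconstructed whenever $\xi_i\neq 0$, and dismisses $\xi_i=0$ as a probability-zero event under the random choice of perturbations. Your one-parameter genericity argument for why $\xi_i=0$ is degenerate is in fact more careful than the paper's treatment, which simply asserts that this probability is zero.
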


\begin{proof}
This theorem can be proved by following similar lines in the proof of Theorem \ref{thm2}. According to Lemma \ref{lemma_external_continuous}, the external attacker's information set $\mathcal{M}$ remains unchanged, even when $\mathbf{x}(0)$ is changed to $\mathbf{x}(0)+\delta c\cdot \mathbf{\xi}$. Let $\mathbf{\xi}=[\xi_1,\cdots,\xi_N]^T$. Then, for any agent $i$, the external eavesdroppers cannot distinguish $x_i(0)$ from $x_i(0)+ \delta c\cdot \mathbf{\xi}$. If $\xi_i\neq 0$, since $\delta c$ can be an arbitrary value, it follows that $x_i(0)+\delta c\cdot \xi_i$ can be also an arbitrary value, implying that there is infinite number of possible initial values of agent $i$ and the external eavesdropper cannot reconstruct the value of $x_i(0)$. As for the case of $\xi_i=0$, when all perturbation signals are randomly chosen from $\mathcal{F}$, the possibility of $\xi_i=0$ is zero. This completes the proof.\hfill $\blacksquare$
\end{proof}

\begin{remark}
Although external eavesdroppers are also considered in \cite{KiaArxiv} and \cite{WangStateDecomposition}, the ideas in these two papers and the current paper are different. In \cite{KiaArxiv}, preventing an eavesdropper from obtaining agents' privacy is achieved by letting the generated perturbation signals satisfy some certain global condition, which is invisible to the eavesdropper. In \cite{WangStateDecomposition}, it is achieved by assuming that the coupling weights between agents are not accessible to the eavesdropper. By contrast, based on the assumption that the system parameter $c_1$ is not known to the eavesdropper, we accomplishes this goal by making use of the extra degree of freedom on $c_1$'s choosing.
\end{remark}

\section{Numerical Simulation}

In this section, we focus on discrete-time case and use a numerical simulation to demonstrate Algorithm \ref{alg:our_alg_discrete}'s effectiveness. Consider a network shown in  Fig. \ref{fig_graph_simu_discrete}.

\begin{figure}[htp]
\centering
\includegraphics[width=0.50\linewidth,height=0.25\linewidth]{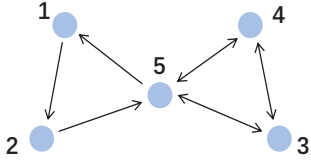}
\caption{The communication topology.}
  \label{fig_graph_simu_discrete}
\end{figure}

First, we show that Algorithm \ref{alg:our_alg_discrete} can deal with internal honest-but-curious agents. Assume that agent 5 is honest-but-curious. If the algorithm in \cite{Kia} is applied, agents 3's and 4's privacy will be revealed to agent 5. By contrast, our algorithm can preserve their privacy according to Theorem \ref{thm2_discrete}. To show this, we consider two implementations of our algorithm. In the first one, all $p_i^{(j)}$s are randomly generated, and the initial conditions and parameters are given as
\begin{equation}\label{simu_case_1_discrete}
\begin{aligned}
&x_1[0]=7,x_2[0]=3,x_3[0]=1,x_4[0]=-2,x_5[0]=-15,\\
&\epsilon_1=1, \epsilon_2=\frac{1}{3}-0.01.
\end{aligned}
\end{equation}
Let agent 4 (respectively, agent 3) be the agent 2 (respectively, agent 1) in the Lemma \ref{lemma_my_internal_discrete}. Thus, in the second implementation, the initial conditions, parameters and perturbation signals are given as
\begin{equation}\label{simu_case_2_discrete}
\begin{aligned}
&\tilde{x}_1[0]=x_1[0],\tilde{x}_2[0]=x_2[0],\tilde{x}_5[0]=x_5[0],\\
&\tilde{x}_3[0]=x_3[0]+9=10,\tilde{x}_4[0]=x_4[0]-9=-11,\\
&\tilde{\epsilon}_1=\epsilon_1, \tilde{\epsilon}_2=\epsilon_2, \\
&\tilde{p}_3^{(3)}=p_3^{(3)}+18,\tilde{p}_3^{(4)}=p_3^{(4)}-9,\tilde{p}_3^{(5)}=p_3^{(5)}-9,\\
&\tilde{p}_4^{(3)}=p_4^{(3)},\tilde{p}_4^{(4)}=p_4^{(4)}-9,\tilde{p}_4^{(5)}=p_4^{(5)}+9,\\
&\tilde{p}_i^{(j)}=p_i^{(j)},\mbox{  }otherwise.
\end{aligned}
\end{equation}

Fig. \ref{fig_internal_states_discrete} shows the state trajectories in these two implementations, from which we can observe that accurate average consensus are achieved in both cases. Moreover, it is shown that $\tilde{x}_i[k]=x_i[k]$, $\forall k\geq 1$ holds. Besides, as depicted in Fig. \ref{fig_internal_transmit_discrete}, $\tilde{y}_i^{(j)}[0]=y_i^{(j)}[0]$ always holds if $(i,j)\neq(4,3)$, implying that
even though agent 3's and 4's initial states are changed, all transmitted messages at the first iteration are identical in the two implementations, except the one transmitted from agent 4 to agent 3. Combining these two figures, we see that $\tilde{\Phi}_5=\Phi_5$, which implies that agent 5 cannot obtain agent 3's and 4's initial states.
\begin{figure}[htbp]
\centering
\subfigure{
\begin{minipage}{4.00cm}
\centering
\includegraphics[width=4.5cm]{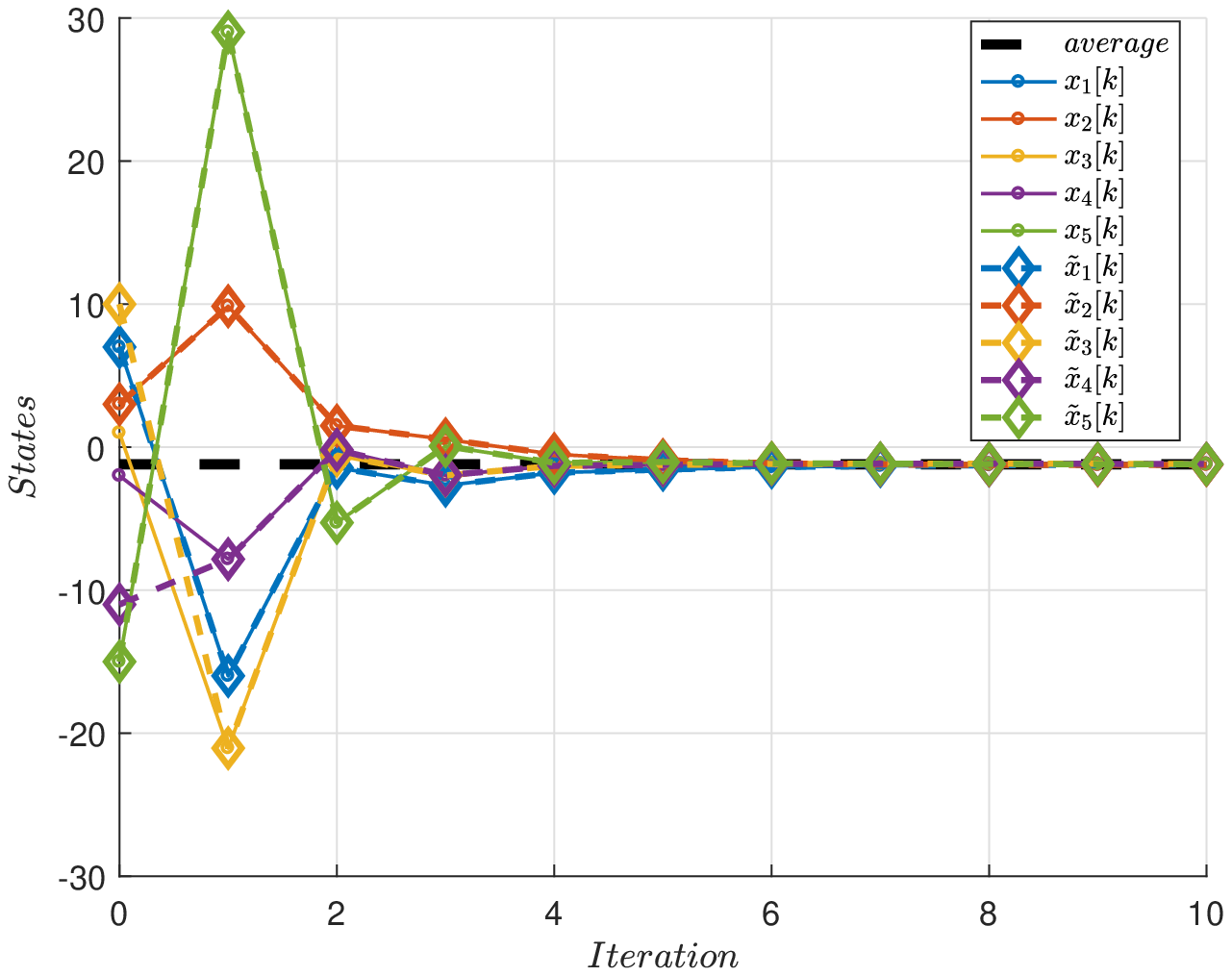}
\caption{Trajectories of the agents' states in the implementation \eqref{simu_case_1_discrete} and the implementation \eqref{simu_case_2_discrete}.}
\label{fig_internal_states_discrete}
\end{minipage}
}
\subfigure{
\begin{minipage}{4.00cm}
\centering
\includegraphics[width=4.5cm]{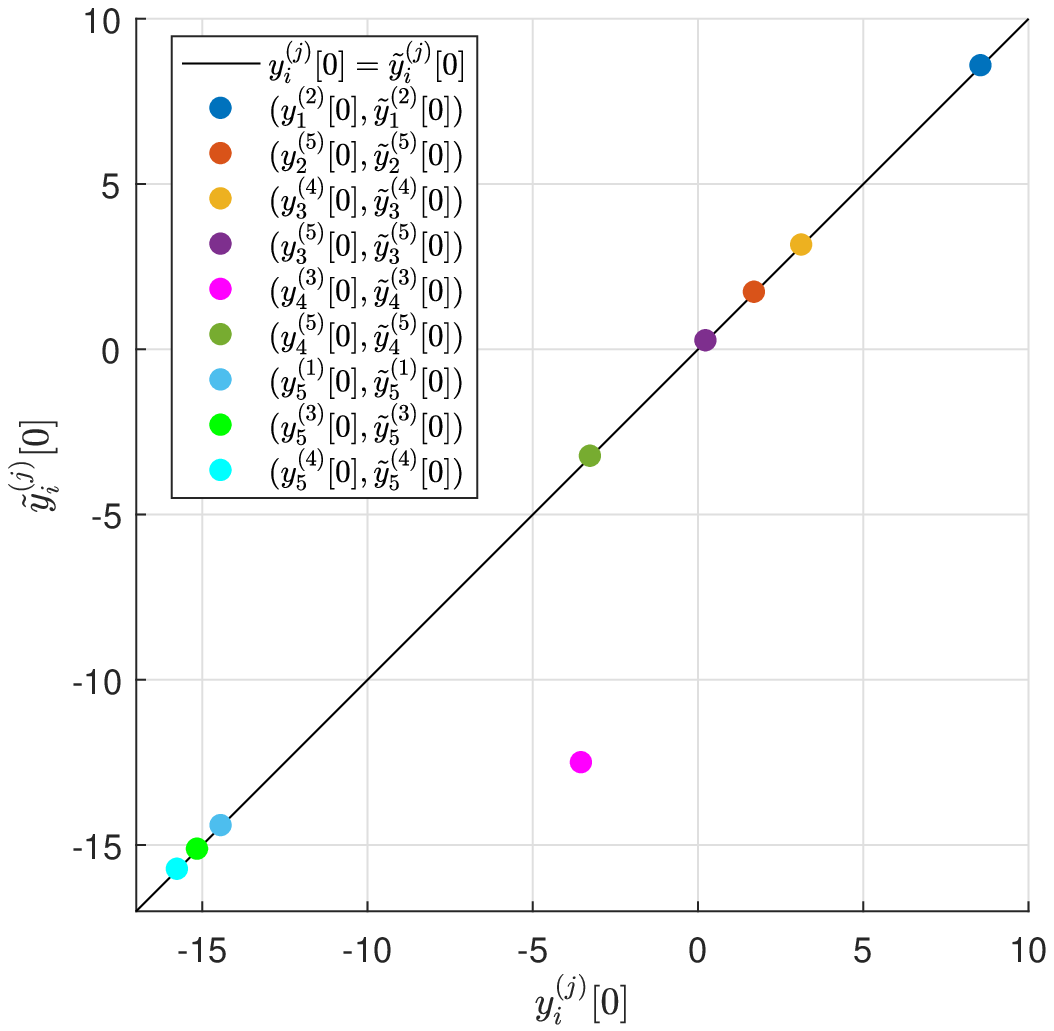}
\caption{Transmitted information at $k=0$ in the implementations \eqref{simu_case_1_discrete} and \eqref{simu_case_2_discrete}: $y_i^{(j)}[0]$s vs $\tilde{y}_i^{(j)}[0]$s.}
\label{fig_internal_transmit_discrete}
\end{minipage}
}
\end{figure}

Now we are going the show how our algorithm deals with external eavesdroppers.  According to Theorem \ref{thm3_discrete}, the attacker cannot obtain any agent's privacy. To show this, another implementation of our algorithm is considered, which is given as follows (here let $\delta \epsilon=0.8$):
\begin{equation}\label{simu_case_3_discrete}
\begin{aligned}
\mathbf{\hat{x}}[0]&=\mathbf{x}[0]+\delta\epsilon(\mathbf{L}\mathbf{x}[0]-\mathbf{P}^T\mathbf{1}_N)\\
&=\begin{bmatrix}
    25.39 & -2.48 & 18.64 & 2.67 & -50.22
  \end{bmatrix}^T, \\
\hat{\epsilon}_1&=(\epsilon_1+\delta\epsilon)=1.8,\mbox{ } \hat{\epsilon}_2=\epsilon_2, \\
\hat{p}_i^{(j)}&=p_i^{(j)}-(\hat{x}_i[0]-x_i[0]),\mbox{ }\forall (i,j)\in\mathcal{E},\\
\hat{p}_i^{(i)}&=-\sum_{j\in\mathcal{N}_i^{out}}\hat{p}_i^{(j)}.
\end{aligned}
\end{equation}

As depicted in Fig. \ref{fig_external_states_discrete}, in the above implementation \eqref{simu_case_3_discrete}, accurate average consensus is also achieved and $\hat{x}_i[k]=x_i[k]$, $\forall k\geq 1$ holds. Moreover, from Fig. \ref{fig_external_transmit_discrete} we can see that $\hat{y}_i^{(j)}[0]=y_i^{(j)}[0]$ always holds, implying that all transmitted messages when $k=0$ are identical in the implementations \eqref{simu_case_1_discrete} and \eqref{simu_case_3_discrete}. Combining these two figures, it is easy to see that the external attacker's information sets $\Theta$ and $\hat{\Theta}$ are exactly the same, even though all agents' initial state values are changed. Thus, all agents' privacy is preserved.
\begin{figure}[htbp]
\centering
\subfigure{
\begin{minipage}{4.00cm}
\centering
\includegraphics[width=4.5cm]{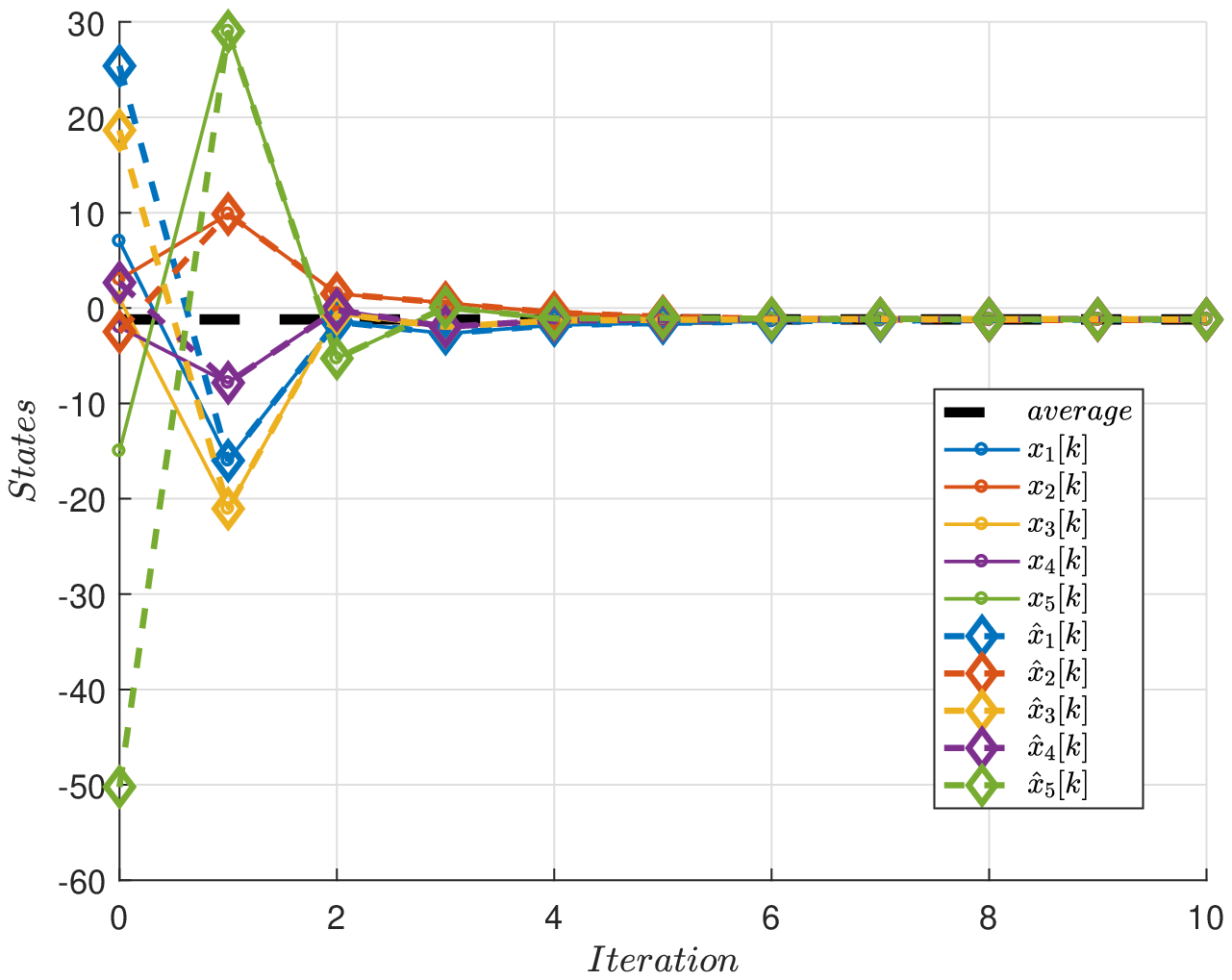}
\caption{Trajectories of the agents' states in the implementation \eqref{simu_case_1_discrete} and the implementation \eqref{simu_case_3_discrete}.}
\label{fig_external_states_discrete}
\end{minipage}
}
\subfigure{
\begin{minipage}{4.00cm}
\centering
\includegraphics[width=4.5cm]{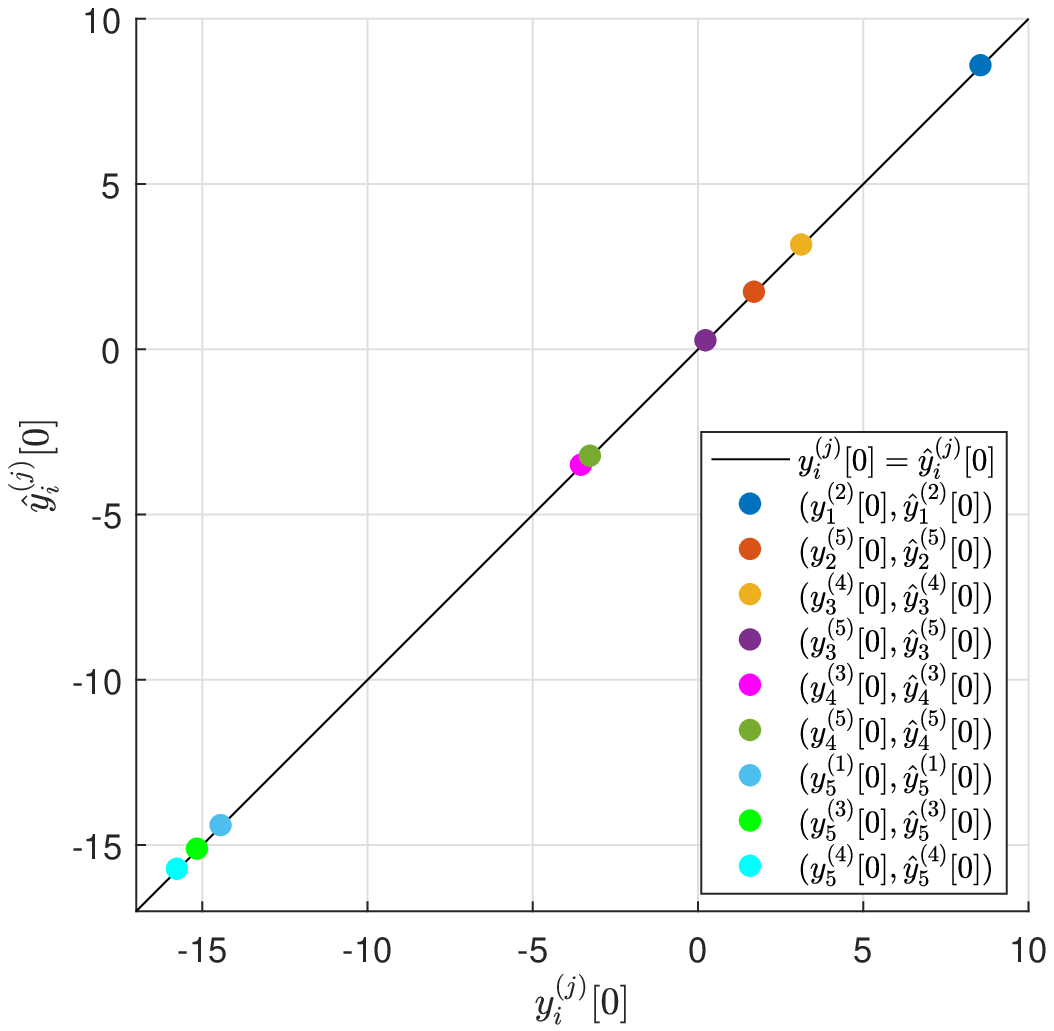}
\caption{Transmitted information at $k=0$ in the implementations \eqref{simu_case_1_discrete} and \eqref{simu_case_3_discrete}: $y_i^{(j)}[0]$s vs $\hat{y}_i^{(j)}[0]$s.}
\label{fig_external_transmit_discrete}
\end{minipage}
}
\end{figure}



\section{Conclusions}
This paper has proposed a novel algorithm to achieve average consensus with privacy preservation over strongly connected and balanced graphs in both discrete- and continuous-time cases. By adding edge-based perturbation signals to transmitted information, the privacy is guaranteed not to be disclosed to either internal honest-but-curious agents or external eavesdroppers. Our algorithm can guarantee accurate average consensus and has a better performance on privacy preservation than the existing related works. The proposed privacy-preserving idea can be expected to be applicable to the distributed optimization problem, which needs future study.

\bibliographystyle{ieeetr}
\bibliography{myreferences}

\end{document}